\documentclass[final]{IEEEtran}
\usepackage{amsthm,amssymb,graphicx,multirow,amsmath,color,amsfonts,physics}%,ulem}
\usepackage[update,prepend]{epstopdf}
\usepackage[noadjust]{cite}
\usepackage[latin1]{inputenc}
\usepackage{tikz}
\usepackage{bbm} % for \nbb1 
\usepackage{pdfpages}
\usepackage{balance}
\usepackage{multirow}
\usepackage{comment}
\usepackage{subfigure}
\allowdisplaybreaks % Allows breaking of eqnarray over multiple pages (avoids unnecessary blanks in the document before eqnarray)

\begin{document}
% Bold lowercase: syntax \nb# where # is {a ... z, 0,1}
\def\nba{{\mathbf{a}}}
\def\nbb{{\mathbf{b}}}
\def\nbc{{\mathbf{c}}}
\def\nbd{{\mathbf{d}}}
\def\nbe{{\mathbf{e}}}
\def\nbf{{\mathbf{f}}}
\def\nbg{{\mathbf{g}}}
\def\nbh{{\mathbf{h}}}
\def\nbi{{\mathbf{i}}}
\def\nbj{{\mathbf{j}}}
\def\nbk{{\mathbf{k}}}
\def\nbl{{\mathbf{l}}}
\def\nbm{{\mathbf{m}}}
\def\nbn{{\mathbf{n}}}
\def\nbo{{\mathbf{o}}}
\def\nbp{{\mathbf{p}}}
\def\nbq{{\mathbf{q}}}
\def\nbr{{\mathbf{r}}}
\def\nbs{{\mathbf{s}}}
\def\nbt{{\mathbf{t}}}
\def\nbu{{\mathbf{u}}}
\def\nbv{{\mathbf{v}}}
\def\nbw{{\mathbf{w}}}
\def\nbx{{\mathbf{x}}}
\def\nby{{\mathbf{y}}}
\def\nbz{{\mathbf{z}}}
\def\nb0{{\mathbf{0}}}
\def\nb1{{\mathbf{1}}}

% Bold capital letters: syntax \nb# where # is {A ... Z}
\def\nbA{{\mathbf{A}}}
\def\nbB{{\mathbf{B}}}
\def\nbC{{\mathbf{C}}}
\def\nbD{{\mathbf{D}}}
\def\nbE{{\mathbf{E}}}
\def\nbF{{\mathbf{F}}}
\def\nbG{{\mathbf{G}}}
\def\nbH{{\mathbf{H}}}
\def\nbI{{\mathbf{I}}}
\def\nbJ{{\mathbf{J}}}
\def\nbK{{\mathbf{K}}}
\def\nbL{{\mathbf{L}}}
\def\nbM{{\mathbf{M}}}
\def\nbN{{\mathbf{N}}}
\def\nbO{{\mathbf{O}}}
\def\nbP{{\mathbf{P}}}
\def\nbQ{{\mathbf{Q}}}
\def\nbR{{\mathbf{R}}}
\def\nbS{{\mathbf{S}}}
\def\nbT{{\mathbf{T}}}
\def\nbU{{\mathbf{U}}}
\def\nbV{{\mathbf{V}}}
\def\nbW{{\mathbf{W}}}
\def\nbX{{\mathbf{X}}}
\def\nbY{{\mathbf{Y}}}
\def\nbZ{{\mathbf{Z}}}

% \mathcal: syntax \ncal# where # is {A ... Z}
\def\ncalA{{\mathcal{A}}}
\def\ncalB{{\mathcal{B}}}
\def\ncalC{{\mathcal{C}}}
\def\ncalD{{\mathcal{D}}}
\def\ncalE{{\mathcal{E}}}
\def\ncalF{{\mathcal{F}}}
\def\ncalG{{\mathcal{G}}}
\def\ncalH{{\mathcal{H}}}
\def\ncalI{{\mathcal{I}}}
\def\ncalJ{{\mathcal{J}}}
\def\ncalK{{\mathcal{K}}}
\def\ncalL{{\mathcal{L}}}
\def\ncalM{{\mathcal{M}}}
\def\ncalN{{\mathcal{N}}}
\def\ncalO{{\mathcal{O}}}
\def\ncalP{{\mathcal{P}}}
\def\ncalQ{{\mathcal{Q}}}
\def\ncalR{{\mathcal{R}}}
\def\ncalS{{\mathcal{S}}}
\def\ncalT{{\mathcal{T}}}
\def\ncalU{{\mathcal{U}}}
\def\ncalV{{\mathcal{V}}}
\def\ncalW{{\mathcal{W}}}
\def\ncalX{{\mathcal{X}}}
\def\ncalY{{\mathcal{Y}}}
\def\ncalZ{{\mathcal{Z}}}

% \mathbb: syntax \nbb# where # is {A ... Z}
\def\nbbA{{\mathbb{A}}}
\def\nbbB{{\mathbb{B}}}
\def\nbbC{{\mathbb{C}}}
\def\nbbD{{\mathbb{D}}}
\def\nbbE{{\mathbb{E}}}
\def\nbbF{{\mathbb{F}}}
\def\nbbG{{\mathbb{G}}}
\def\nbbH{{\mathbb{H}}}
\def\nbbI{{\mathbb{I}}}
\def\nbbJ{{\mathbb{J}}}
\def\nbbK{{\mathbb{K}}}
\def\nbbL{{\mathbb{L}}}
\def\nbbM{{\mathbb{M}}}
\def\nbbN{{\mathbb{N}}}
\def\nbbO{{\mathbb{O}}}
\def\nbbP{{\mathbb{P}}}
\def\nbbQ{{\mathbb{Q}}}
\def\nbbR{{\mathbb{R}}}
\def\nbbS{{\mathbb{S}}}
\def\nbbT{{\mathbb{T}}}
\def\nbbU{{\mathbb{U}}}
\def\nbbV{{\mathbb{V}}}
\def\nbbW{{\mathbb{W}}}
\def\nbbX{{\mathbb{X}}}
\def\nbbY{{\mathbb{Y}}}
\def\nbbZ{{\mathbb{Z}}}

% \mathfrak:
\def\nfrakR{{\mathfrak{R}}}

% Roman: {\rm } syntax \nrm# where # is {a ... z}
\def\nrma{{\rm a}}
\def\nrmb{{\rm b}}
\def\nrmc{{\rm c}}
\def\nrmd{{\rm d}}
\def\nrme{{\rm e}}
\def\nrmf{{\rm f}}
\def\nrmg{{\rm g}}
\def\nrmh{{\rm h}}
\def\nrmi{{\rm i}}
\def\nrmj{{\rm j}}
\def\nrmk{{\rm k}}
\def\nrml{{\rm l}}
\def\nrmm{{\rm m}}
\def\nrmn{{\rm n}}
\def\nrmo{{\rm o}}
\def\nrmp{{\rm p}}
\def\nrmq{{\rm q}}
\def\nrmr{{\rm r}}
\def\nrms{{\rm s}}
\def\nrmt{{\rm t}}
\def\nrmu{{\rm u}}
\def\nrmv{{\rm v}}
\def\nrmw{{\rm w}}
\def\nrmx{{\rm x}}
\def\nrmy{{\rm y}}
\def\nrmz{{\rm z}}

% Special symbols
\def\nbydef{:=}
\def\nborel{\ncalB(\nbbR)}
\def\nboreld{\ncalB(\nbbR^d)}
\def\sinc{{\rm sinc}}

% Theorems etc.
\newtheorem{lemma}{Lemma}
\newtheorem{thm}{Theorem}
\newtheorem{definition}{Definition}
\newtheorem{ndef}{Definition}
\newtheorem{nrem}{Remark}
\newtheorem{theorem}{Theorem}
\newtheorem{prop}{Proposition}
\newtheorem{cor}{Corollary}
\newtheorem{example}{Example}
\newtheorem{remark}{Remark}
\newtheorem{assumption}{Assumption}
	
%%%%%%%% Backwards compatibility

\newcommand{\ceil}[1]{\lceil #1\rceil}
\def\argmin{\operatorname{arg~min}}
\def\argmax{\operatorname{arg~max}}
\def\figref#1{Fig.\,\ref{#1}}%
\def\E{\mathbb{E}}
\def\EE{\mathbb{E}^{!o}}
\def\P{\mathbb{P}}
\def\pc{\mathtt{P_c}}
\def\rc{\mathtt{R_c}}   % rate coverage
\def\p{p}

\def\V{\operatorname{Var}}
\def\erfc{\operatorname{erfc}}
\def\erf{\operatorname{erf}}
\def\opt{\mathrm{opt}}
\def\R{\mathbb{R}}
\def\Z{\mathbb{Z}}

\def\LL{\mathcal{L}^{!o}}
\def\var{\operatorname{var}}
\def\supp{\operatorname{supp}}

\def\N{\sigma^2}
\def\T{\beta}							% Threshold = \beta_i
\def\sinr{\mathtt{SINR}}			% Signal to interference plus noise ratio
\def\snr{\mathtt{SNR}}
\def\sir{\mathtt{SIR}}
\def\ase{\mathtt{ASE}}
\def\se{\mathtt{SE}}

\def\calN{\mathcal{N}}
\def\FE{\mathcal{F}}
\def\calA{\mathcal{A}}
\def\calK{\mathcal{K}}
\def\calT{\mathcal{T}}
\def\calB{\mathcal{B}}
\def\calE{\mathcal{E}}
\def\calP{\mathcal{P}}
\def\calL{\mathcal{L}}
%\DeclareMathOperator{\Tr}{Tr}
%\DeclareMathOperator{\rank}{rank}
%\DeclareMathOperator{\Pois}{Pois}

%\DeclareMathOperator{\TC}{\mathtt{TC}}
%\DeclareMathOperator{\TCL}{\mathtt{TC_l}}
%\DeclareMathOperator{\TCU}{\mathtt{TC_u}}

% Fading
\def\l{\ell}
\newcommand{\fad}[2]{\ensuremath{\mathtt{h}_{#1}[#2]}}
\newcommand{\h}[1]{\ensuremath{\mathtt{h}_{#1}}}

\newcommand{\err}[1]{\ensuremath{\operatorname{Err}(\eta,#1)}}
\newcommand{\FD}[1]{\ensuremath{|\mathcal{F}_{#1}|}}

%% Symbols changed
% \def\i{\mathbf{1}}					% changed to \nb1
% \def\d{\mathrm{d}}					% changed to \nrmd
% \def\L{\mathcal{L}}					% changed to \ncalL
% \begin{definition}					% changed to \begin{ndef}

% \l also gives problems. Use \ell after defining it if needed.

%% D2D def
\def\Bx{{\mathcal{B}}^x}
\def\Bxx{{\mathcal{B}}^{x_0}}
\def\jx{y}
\def\m{(\bar{n}-1)}
\def\mm{\bar{n}-1}
\def\Nx{{\mathcal{N}}^x}
\def\Nxo{{\mathcal{N}}^{x_0}}
\def\wj{w_{jx_0}}
\def\uij{u_{jx}}
 \def\yj{y}
 \def\yjx{y}
 \def\zjx{z_x}
 \def \tx {y_0}
 \def \htx {h_0}

\def\rx{z_{1}}
\def\ry{z_{2}}

\def\Rx{Z_{1}}
\def\Ry{Z_{2}}

%% fading
\def \hyxx {h_{y_{x_0}}}
\def \hyx {h_{y_x}}

\def\nbb1{\mathbbm{1}}
\def\yi{\textbf{y}_i}
\def\yy{\textbf{y}_1}
\def\xx{\textbf{x}_0}
\def\wj{\textbf{w}_j}
\def\od{\textbf{o}_d}
\def\oc{\textbf{o}_c}
\def\ie{{\em i.e. }}
\def\eg{{\em e.g. }}
\def\iid{{\em i.i.d. }}
\def\gi{G_{\yi}}
\def\g1{G_{\yy}}
\def\gj{G_{\wj}}
\def\hi{H_{\yi}}
\def\h1{H_{\yy}}
\def\hj{H_{\wj}}
\def\avg{\rm avg}

\def\rmnuma{\rm\uppercase\expandafter{\romannumeral1}}
\def\rmnumb{\rm\uppercase\expandafter{\romannumeral2}}
\def\rmnumc{\rm\uppercase\expandafter{\romannumeral3}}
\def\rmnumd{\rm\uppercase\expandafter{\romannumeral4}}
\def\rmnume{\rm\uppercase\expandafter{\romannumeral5}}
\def\rmnumf{\rm\uppercase\expandafter{\romannumeral6}}
\pagenumbering{gobble}
\graphicspath{{./Figures/}}
\title{
Performance Evaluation of RF-powered\\ IoT in Rural Areas:\\ The Wireless Power Digital Divide}
\author{
 Hao Lin,~\IEEEmembership{Student Member,~IEEE}, Mustafa A. Kishk,~\IEEEmembership{Member,~IEEE} and Mohamed-Slim Alouini,~\IEEEmembership{Fellow,~IEEE}
\thanks{Hao Lin is with the Electrical and Computer Engineering Program, Computer, Electrical and Mathematical Sciences and Engineering Division (CEMSE), King Abdullah University of Science and Technology (KAUST),
Thuwal 23955-6900, Saudi Arabia (e-mail: hao.lin.std@gmail.com).\\
% Hao Lin is with the University of Electronic Science and Technology
% of China (UESTC), Chengdu 611731, China, and also with the CEMSE
% Division, King Abdullah University of Science and Technology (KAUST),
% Thuwal 23955-6900, Saudi Arabia (e-mail: hao.lin.std@gmail.com).\\
\indent Mustafa A. Kishk is with the Department of Electronic Engineering Program,
Maynooth University, Maynooth, W23 F2H6 Ireland (e-mail:
mustafa.kishk@mu.ie).\\%National University of Ireland
\indent Mohamed-Slim Alouini is with the CEMSE Division, King Abdullah
University of Science and Technology (KAUST), Thuwal 23955-6900,
Saudi Arabia (e-mail: slim.alouini@kaust.edu.sa).}
}

\maketitle
\vspace{-2cm}
\begin{abstract}
Bridging the digital divide is one of the goals of mobile networks in the future, and further building IoT networks in rural areas is a feasible solution. This paper studies the downlink performance of rural wireless networks, where IoT devices we consider are battery-less and powered only by ambient radio-frequency (RF) signals. We model a rural area as a finite area that is far from the city center. The base stations (BSs) in the whole city and the access points (APs) in the finite network both act as sources of wireless RF signals harvested by IoT devices. We assume that BSs follow an inhomogeneous Poisson Point Process (PPP) with a 2D-Gaussian density, and a fixed number of APs are uniformly distributed inside the finite area following a Binomial Point Process (BPP). The IoT devices we consider can harvest energy and receive downlink signals in each time slot, which is divided into two parts: (1) a charging sub-slot, where the RF signals from BSs and APs are harvested by IoT devices, and (2) a transmission sub-slot, where each IoT device uses the harvested energy to receive and process downlink signals. We consider two main system requirements: minimum energy requirement and signal-to-interference-plus-noise ratio (SINR). Using these two parameters, we investigate the overall coverage probability (OCP) related to them. We first study the effect of remoteness in rural areas on energy harvesting performance. Then we analyze the influence of IoT device's location and the number of APs on coverage probability when the effect of BSs can be ignored. This paper shows that the IoT devices located inside the rural area can obtain about twice the ECP and OCP of IoT devices located near the edge. For the average downlink performance in rural areas with radii less than $100\;{\rm m}$, more than 80\% of the RF-powered IoT devices can be supported when there are 100 APs deployed.
\end{abstract}
% \vspace{-0.7cm}
\begin{IEEEkeywords}
% \vspace{-0.3cm}
Stochastic geometry, finite wireless network, energy harvesting, Binomial Point Process, overall coverage probability.
\end{IEEEkeywords}

\section{Introduction} \label{sec:Intro}
\indent Although 5G cellular networks are gradually being implemented and 6G key technologies are also being studied, most people in rural areas still face challenges when using the Internet. Due to the limitation of population density and economic benefits, the communication infrastructure deployed in rural areas around the world has lagged behind that in urban areas \cite{9681630}, which has caused the digital divide. The digital divide is mainly seen as a challenge, but also as an opportunity for further development of wireless networks. Many methods for understanding and eliminating the digital divide have been proposed, including data-driven policy-making \cite{9681629}, high altitude platform stations \cite{9900369}, mega configurations of low earth orbit (LEO) \cite{9681631}, wind-turbine-mounted base stations (WTBS) \cite{9681626}, and cellular-connected unmanned aerial vehicles (UAVs) toward 6G \cite{mozaffari2021toward}. It is worth noting that, it is a feasible choice to use Internet of Things (IoT) devices to help build modern agricultural industry technology systems and bridge the digital divide in the whole world.\\
\indent The core concept of the Internet of Things is to equip everyday objects with the functions of identification, sensing, networking, and processing so that people can communicate or interact with each other through the Internet to achieve the goals they expect \cite{whitmore2015internet}. The application scenarios of IoT devices include modern medical care, urban transportation, smart home, smart agriculture, and other scenarios with development needs and potential \cite{patel2016internet}. The effective use of IoT also contributes to the achievement of the United Nations Sustainable Development Goals (SDGs). Unlike the devices that people carry with them every day like mobile phones and tablets, IoT devices may be deployed in some areas hard to be directly reached, such as inside instruments or equipment and remote locations like forests or farms \cite{7842431,6644241}. Their spatial location makes wired charging and battery replacement difficult while harvesting radio frequency (RF) signals becomes a viable option \cite{6951347,7120022}. Authors in \cite{7010878} summarized contributions of energy harvesting wireless communications (EHWC), including offline/online energy management, joint wireless energy and information transfer, and the energy consumption models. Communication networks with simultaneous wireless information and power transfer (SWIPT) is an important concept where the signal sources also work as energy stations of RF-powered devices. A general receiver operation has been proposed in literature \cite{6623062}, which can split the received signals for energy charging and information decoding, to study the trade-off between the energy demand and the data rate. \\
\indent Stochastic geometry framework is widely used in the performance evaluation of wireless networks. Poisson Point Process (PPP) and Binomial Point Process (BPP) are both used to model the distribution of RF sources and IoT devices, which are suitable in infinite space and finite space, respectively \cite{7882710}.  %Authors in \cite{okati2020downlink} and \cite{cherif2020downlink} both model the distribution of satellite networks and aerial-BSs as BPPs. The work in \cite{kishk2020exploiting} model the locations of BSs being visible by a typical user by an inhomogeneous PPP in a large-scale wireless network with reconfigurable intelligent surfaces (RISs). 
In this paper, we investigate the performance of RF-powered IoT devices in rural and remote areas while capturing the influence of the lack of infrastructure in such regions. More details on the contributions are provided in Sec. \ref{subsec:contributions}.

\subsection{Related Work}
Stochastic geometry is an important mathematical tool, which can help us tractably analyze various types of wireless networks without losing much accuracy. Energy harvesting (EH), especially from ambient wireless signals, can become particularly necessary in rural and remote areas when it is impractical to rely on frequent battery replacement. In this article, we mainly study the effect of relying on energy harvesting for IoT on the communication performance in rural areas. Therefore, we categorize relevant literature into {\em i) stochastic geometry for energy harvesting} and {\em ii) communications in rural areas}.\\
\indent {\em Stochastic geometry for energy harvesting}: Energy harvesting wireless networks have many benefits, including increasing the lifetime of devices and reducing reliance on traditional energy sources \cite{6786061}. For different energy harvesting models, authors in \cite{9201540} characterized the harvested energy and average achievable downlink rate for SWIPT in cell-free massive multiple-input multiple-output (MIMO). Authors in \cite{9723559} investigated the application of SWIPT in networks with cooperative non-orthogonal multiple access (CNOMA) and derived the expression of outage probability and throughput based on PPP assumption.   
 In literature \cite{8536464}, joint energy and signal-to-interference-plus-noise ratio (SINR) coverage probability was proven useful for evaluating the performance of RF-powered devices with a time-slotted architecture. In addition, authors in \cite{8896898} studied the end-to-end outage probability in EH-enabled device-to-device (D2D) cellular networks, and the network performance for cache- and EH-enabled D2D enabled cellular networks was investigated in literature \cite{9390298}. To maximize coverage and EH probabilities, authors in \cite{8888216} optimized the altitude of RF-powered aerial base stations serving terrestrial devices based on a stochastic geometry analysis. In addition, the work in \cite{Fatma_EH, 9093022, 9463400} characterized the outage probability and self-sustainability of RF-powered D2D IoT networks. 
\indent {\em Communications in rural areas}: Several concepts and theories about communications in rural areas have been introduced in \cite{demiryurek2010information}, including information systems and communication networks for agriculture and people in rural areas. The economic, social, educational, and knowledge inequalities between those with information and communication technology (ICT) and those without ICT were defined as \textit{digital divide}, and many rural communities in the world are seeking solutions to bridge the digital divide \cite{el2020toward,chaoub20216g}. After the 5G standard was proposed, the challenges and opportunities of future rural wireless communications have been discussed, where novel low-cost network enablers (including non-terrestrial networks and drone base stations) were proved feasible to support future applications in rural areas such as smart agriculture \cite{9681630}. Authors in \cite{khalil2017feasibility} analyzed the feasibility, architecture, and cost considerations of using TV white spaces (TVWS) to enhance rural Internet access in 5G cellular networks. The work on \cite{katzis2020opportunities} presented the advantages and challenges of employing TVWS spectrum with 5G enabled high altitude platforms (HAPs). In \cite{qin2022drone}, renewable energy charging stations were demonstrated which address the limited battery of EH-UAVs. A new stochastic geometry framework was presented in \cite{9420290} to evaluate the coverage probability in a region including urban and rural areas, and it introduced aerial base stations (ABSs) to compensate for the deficiency in such a UAV-assisted cellular network. Other studies such as \cite{9681626} proved that it is feasible to use wind turbine-mounted base stations (WTBSs) to enhance rural connectivity. Authors in \cite{fourati2022bridging} suggested that artificial intelligence (AI) technologies, especially reinforcement learning (RL) and federated learning (FL), are expected to become the key to space, air, and ground optimization and integration for rural connectivity.\\
\indent {However, existing works lack a unified mathematical framework for communication networks in a large-scale city with urban areas and rural areas and the overall performance of RF-powered IoT devices. There are few discussions of the constraints for RF-powered IoT networks in rural areas.}

\subsection{Contributions}\label{subsec:contributions}
The contributions of this paper can be summarized as follows:
\begin{itemize}
    \item {\em System modeling and problem statement}: We design a large-scale city model to describe a novel terminology, which is the \textit{wireless power digital divide}, that refers to the unbalanced capabilities to harvest wireless energy between IoT devices in urban and rural areas. For that purpose, we propose a system model composed of a cellular network with BSs distributed throughout the city according to an inhomogeneous PPP, to capture the reduction in BS density in rural areas. {The proposed system also involves a fixed number of APs deployed in a finite area and modeled as a BPP, while the rural area is modeled as a finite area far away from the city center.} The proposed system enables mathematically capturing the unfairness in access to wireless power between urban and rural users. %To describe the digital divide between rural areas and urban areas, we model the BSs in the whole city as an inhomogeneous PPP with a 2D-Gaussian density like \cite{9420290}. Differently, we model a fixed number of APs in the finite area using a BPP considering the space and available resources are limited in it. We adopt the 'time-slotted' architecture of our IoT device and divide each time slot into two sub-slots: {\em i) a charging sub-slot} and {\em ii) a transmission sub-slot}. In the charging sub-slot, the IoT device harvests all the ambient RF signals from BSs and APs with Rayleigh fading and standard power law path-loss. In the transmission sub-slot, we consider the signal from the closest AP is useful while the signals from other APs and BSs are considered as interference. Based on the system model, we introduce the $k^{th}$ closest TX-selection policy in \cite{7248843} and set $k=1$ to calculate the harvested energy and {\rm SINR}. Both considering energy coverage probability (ECP) and transmission coverage probability (TCP), we define the overall coverage probability (OCP) by jointing them together (\ie the joint energy and SINR coverage probability). Our first goal is to analyze how remoteness in rural areas influences energy harvesting performance. When the rural area is far from the city center, the effect of BSs can be ignored and the APs are the only source of RF signals. In this case, we discuss the effect of IoT device's location and the number of APs. If the IoT devices are uniformly distributed inside the rural area, we further propose a `center-edge distribution' and analyze the existence of optimal area partitioning of APs when IoT device also follows this special distribution.\\
    \item {\em Technical and mathematical challenges}: This paper is the first one to compute the distribution of wireless harvested energy with the locations of RF sources distributed as a superposition of an inhomogeneous PPP and a BPP. Given that RF sources are distributed according to different point processes, a level of complexity arises in both energy coverage analysis as well as interference analysis, which are both handled carefully in this paper. In addition, as a result of the assumption of both RF-powered IoT devices and local RF sources to be distributed in a finite area, few novel distance distributions are derived in this paper and properly utilized for the performance analysis of the considered system. %Because APs inside the rural area and BSs located in the whole city area have different uniform power of ambient RF signals and distributions, we could not analyze them using the same method. The harvested energy from APs can be regarded as the product of uniform power and summation of fadings, and the energy harvested from BSs is similar. We could not calculate them using the distributions directly, so we require an available method and the closest TX-selection policy is suitable. We define the $D_i$ which means the Euler distance between the APs and the IoT device. We use the closest $D_i$ to calculate not only the conditional ECP but also TCP. Adopting the closest TX-selection policy, the closest AP to our receiver is considered as the serving one whose signal is useful, and others are all considered as interference at the same time. To analyze the transmission coverage probability, we should use the Laplace transform of interference from disturbing APs or BSs. All signal sources are independently distributed, but APs follow a BPP while BSs follow an inhomogeneous PPP with a 2D-Gaussian density, which increases the computational complexity. We use the PGFL of PPP and double integral in the whole space to achieve the expression of coverage probabilities. When we introduce the `center-edge' distribution of IoT devices and APs, the areas are divided into several parts, which makes the expressions more complex.\\
    \item {\em Main findings and revealed insights}: In this paper, we first derive the expression of harvested energy, {\rm SINR} and coverage probabilities. Then, we show how the remoteness of finite areas affects energy harvesting performance, which causes the \textit{wireless power digital divide}. We also describe the effect of the number of APs and the distance between the IoT device and the rural center on energy harvesting performance and overall performance. Finally, we prove the existence of APs' optimal distribution when the IoT devices and APs inside rural areas are further limited in a special case. %Using the stochastic geometric tools, we talk about the effect of remoteness in rural areas on energy harvesting. When the rural area is close to the city center, energy coverage can be generally satisfied. However, due to the uneven density of BSs in the whole city, the harvesting performance of the receiver located in rural areas declines as the rural area are far away from the city center. Based on the existence of APs, the receiver in rural areas which lost assistance from BSs can still have a stable performance. In this case, the receiver inside the rural area has better overall performance than beside the edge, and the number of APs influences the energy supply and the interference at the same time. Because of spatial constraints, it is difficult for receivers and APs to be completely distributed in all parts of the rural area. We prove the existence and changing trend of the optimal distribution of APs in this case. 
\end{itemize} 
% \indent We have introduced some related work about how stochastic geometry helps us model wireless networks with energy-harvesting receivers. Then, we build the system model and describe the problems we want to discuss in this paper. We mention some technical and mathematical challenges and how we solve them. Finally, we conclude with the main findings of this paper.\\
% System modeling and problem statement

\section{System Model} \label{sec:SysMod}
When we study the energy harvesting performance and overall performance of each IoT device with energy harvesting, we need to consider these factors: the location or distribution of IoT devices, the remoteness of the finite area, and the distribution of BSs in the whole city and APs in the finite area. {In the large city model we introduce in this section, a rural area is considered a finite area far from the city center. We describe the basic system model in Fig.\ref{fig:system}.}
\begin{figure*}[htbp]
    \centering
    \includegraphics[width=1.5\columnwidth]{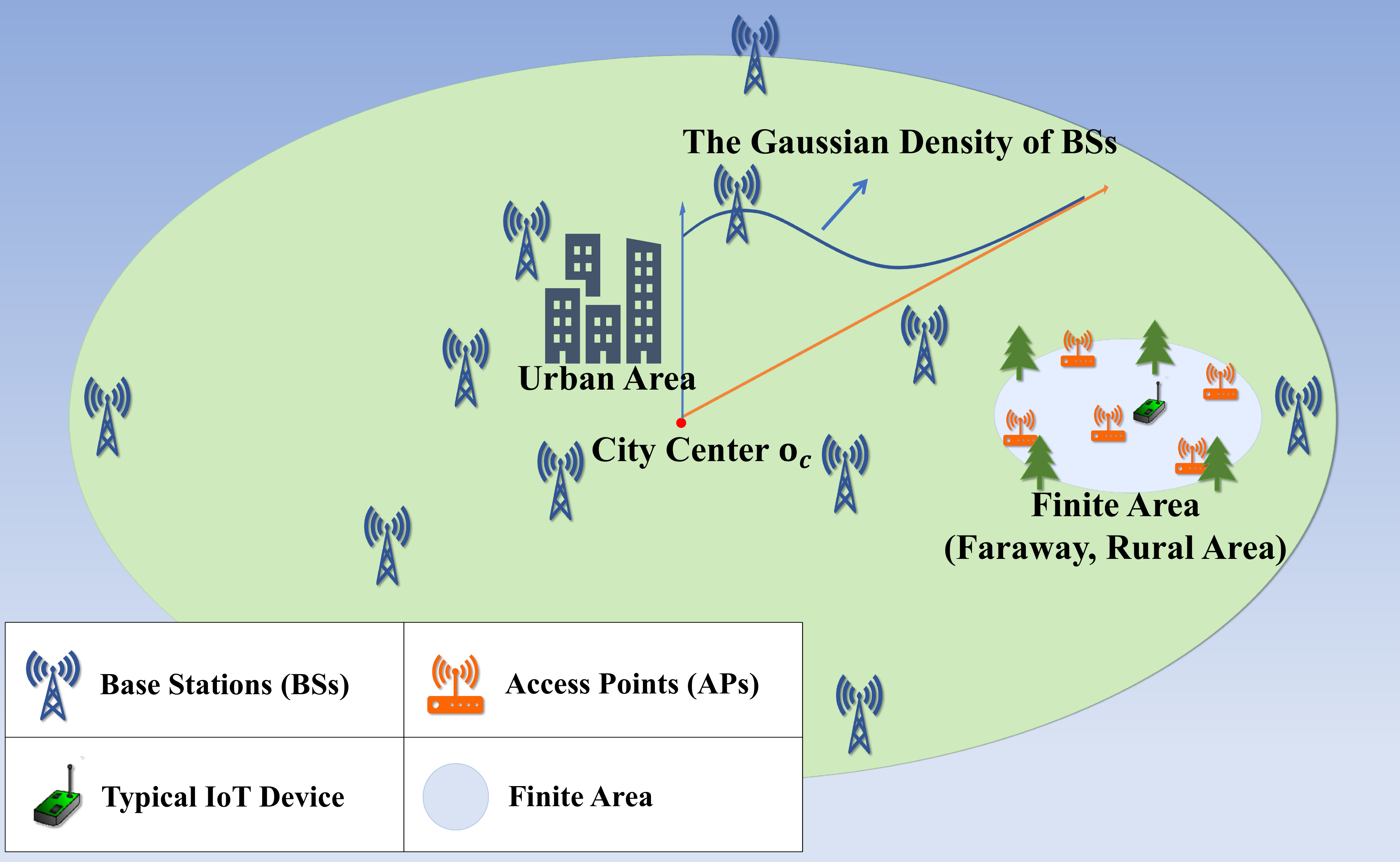}%
    \caption{Illustration of the system model. {There is a finite area in a large city. A typical IoT device (receiver) is located inside this finite area. A finite area that is far from the city center is used to model a rural area.} The BSs in the whole city follow an inhomogeneous PPP with a 2D-Gaussian density $\widetilde{\lambda}_p(\zeta)$ and the APs inside the finite area are distributed according to a BPP.}
    \label{fig:system}
\end{figure*}

\subsection{IoT devices Modeling}
As introduced in \cite{7882710}, we can define a finite area as $\ncalA=\textbf{b}(\od,r_d)\subset \nbbR^2$ which means a circular area centered at $\od$ with radius $r_d$. The BSs and APs both act as sources of wireless RF signals harvested by IoT devices. Next, we model the location of the typical IoT device and distribution of IoT devices (working as receivers) using different coordinate systems in two different cases: \textit{1) considering both BSs and APs} and \textit{2) only considering APs}. The second case is particularly relevant in remote areas with no presence of cellular networks, which is investigated in a more detailed manner. 
\subsubsection{Considering both BSs and APs}
 If the RF signals from BSs can not be ignored, we assume that the rural center $\od$ is on $\mathtt{x-axis}$, \ie $\od=(r_c,0)$ where $r_c$ is the distance between $\od$ and $\oc$. In the finite area, we assume that $\mathtt{x'-axis}$ is the line that passes through the typical IoT device located at $\xx$ and the rural center $\od$. As shown in Fig.\ref{fig:rural}, the angle between $\mathtt{x'-axis}$ and $\mathtt{x-axis}$ is $\psi$, and the coordinates of the typical IoT device location is $\xx=(\zeta_0,\gamma_0)$, where $\zeta_0=\sqrt{r_c^2+v_0^2+2r_cv_0\cos \psi}$ and $\gamma_0=\arctan(\frac{v_0\sin \psi}{r_c+v_0\cos \psi})$. In general, using $\mathtt{x-axis}$ and $\mathtt{x'-axis}$, we can obtain the positional relationship between the typical IoT device, rural center and city center when $r_c$, $v_0$ and $\psi$ are known.\\
 \indent {When the finite area is not far away from the city center, the performance of each IoT device is affected by the cellular network in the whole city (BSs) and the local network in the finite area (APs).} IoT devices in different positions have different energy harvesting performance and overall coverage performance, which are relevant to the value of $r_c$, $v_0$ and {$\psi$}.

\begin{figure}
    \centering
    \includegraphics[width=1\columnwidth]{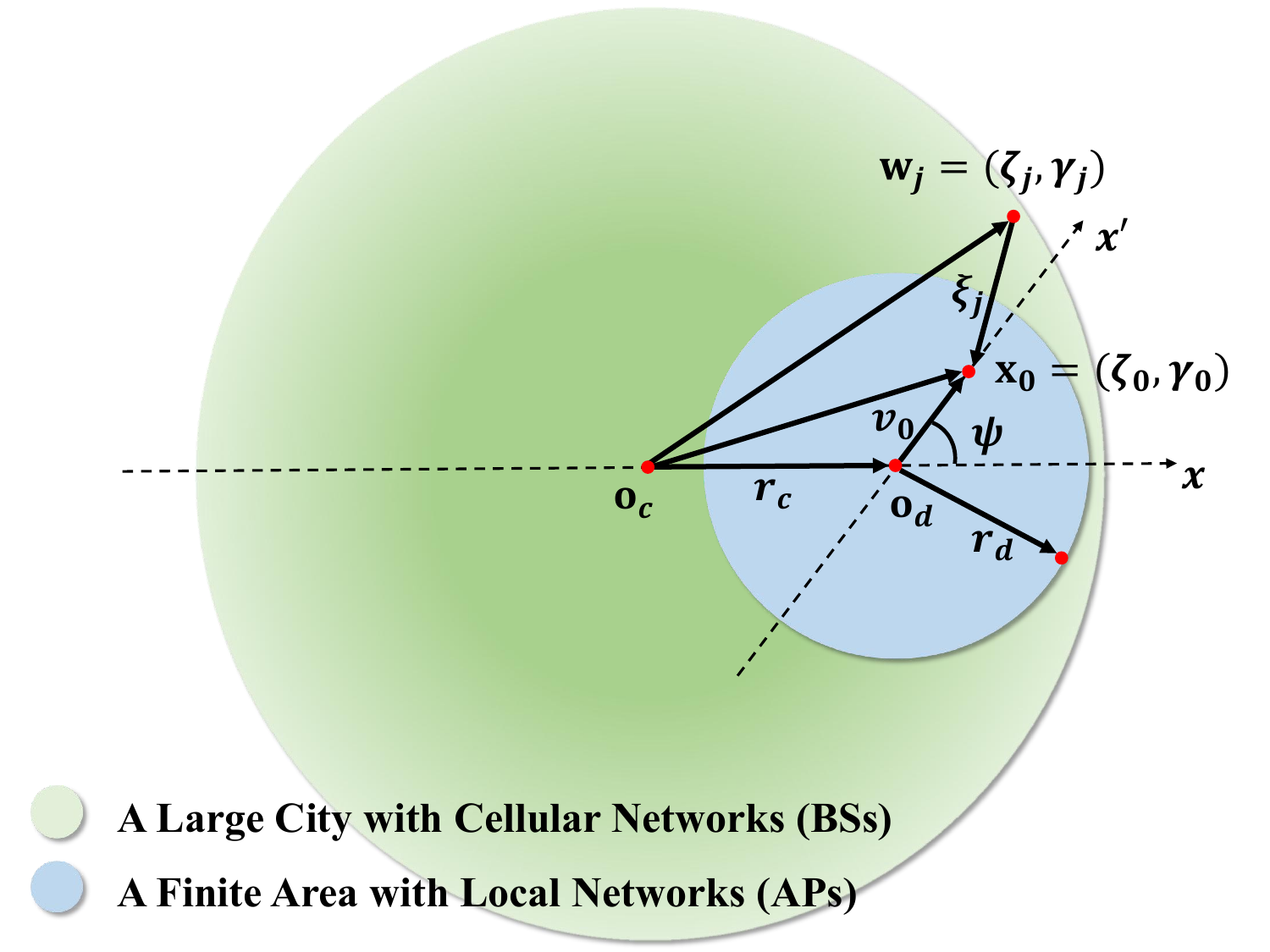}
    \caption{Illustration of the mathematical model of networks in the large city and the considered finite area inside it. {In the polar coordinate system, $\od$ and $r_d$ are the center and radius of the finite area, and $\oc$ is the center of the whole city. The location of any of BSs is $\wj=(\zeta_j,\gamma_j)$ and the location of the typical IoT device (receiver) is $\xx=(\zeta_0,\gamma_0)$.} The distance between them is $\xi_j=\sqrt{\zeta_j^2+\zeta_0^2-2\zeta_j\zeta_0\cos(\gamma_j-\gamma_0)}$.}
    \label{fig:rural}
\end{figure}

\subsubsection{Only considering APs}
{When the finite area is far away from the city center, the effect of BSs on IoT devices can be ignored. Therefore, we only need to consider APs uniformly distributed in the finite area $\ncalA$.} The performance of the typical IoT device only depends on the distance to the center of $\ncalA$ because $\ncalA$ is a circular area and APs are uniformly distributed in it. Such a finite area $\ncalA$ far away from the city center is used to model a rural area, whose center is called `rural center' $\od$. If the typical IoT device location is fixed, we still assume that it is on the $\mathtt{x'-axis}$ and $\xx=(v_0,0)$, where $v_0=\|\xx-\od\| \in\left[0,r_d\right]$ is the distance to the rural center. The CDF of $v_0$ is the ratio of $\pi v_0^2$ and $\pi r_d^2$, and the PDF is the derivative which can be written as:
\begin{equation}
f_V(v_0)=\frac{2v_0}{r_d^2}, 0<v_0<r_d.
\label{fVv}
\end{equation}

\indent In the actual situation, some places in rural areas are not suitable to set up IoT devices or APs, such as rivers or muddy areas. The center and border areas of rural areas are generally more likely to be used for networking, such as residential areas or transportation tracks. So we propose a new distribution called `center-edge distribution', in which the IoT devices are uniformly and randomly distributed in these two areas: \textit{i)} $\ncalA_1=\textbf{b}(\od,R_1)$\textit{, a circular area centered at $\od$ with radius $R_1$,} and \textit{ii)} $\ncalA\backslash\ncalA_2=\textbf{b}(\od,r_d)\backslash\textbf{b}(\od,R_2)$\textit{, a ring area centered at the $\od$ with inner radius $R_2$ and outer radius $r_d$, where $R_2>R_1$}. By setting the values of $R_1$ and $R_2$, we can model finite areas based on different natural environments and planning. The new PDF of the device-to-center distance $v_0$ in this case is expressed as:
\begin{equation}
    % f_V^{CE}(v_0)=\left\{
    % \begin{array}{cl}
    %      \frac{2v}{r_d^2-R_2^2+R_1^2},&0<v_0<R_1 \;or\; R_2<v_0<r_d\\
    %      0,&otherwise
    % \end{array}
    % \right.,
     f_V^{CE}(v_0)=\frac{2v_0}{r_d^2-R_2^2+R_1^2},\;0<v_0<R_1 \;{\rm or}\; R_2<v_0<r_d,
\end{equation}
and we use an ordered pair $(R_1, R_2)$ to represent the range of the IoT device's distribution in this case.

\subsection{Base Stations Modeling}
In a large city, we model the rural area as a limited space where the IoT devices can not only receive RF signals from APs located in this space but also be affected by signals from BSs throughout the whole city. We use $\zeta_j=\|\wj-\oc\|$ to represent the distance between the city center $\oc$ and a BS located at $\wj=(\zeta_j,\gamma_j)$, where $\|\cdot\|$ means the Euclidean norm. The distance between a BS at $\wj$ and the typical IoT device at $\xx$ is 
\begin{equation}
\xi_j=\sqrt{\zeta_j^2+\zeta_0^2-2\zeta_j\zeta_0\cos(\gamma_j-\gamma_0)}.
\end{equation}
\indent As introduced in \cite{9420290}, in order to capture the imbalance of ICTs between urban and rural areas, BSs can be modeled using an inhomogeneous PPP $\Phi_{BS}\equiv \{\wj\}\subset \nbbR^2$ with a 2D-Gaussian density $\widetilde{\lambda}_p(\zeta)=\ncalG(\zeta)\lambda_p$, where 
\begin{equation}
\ncalG(\zeta)=\frac{1}{\sigma_p \sqrt{2\pi}}\exp(-\frac{\zeta^2}{2\sigma_p^2}).
\end{equation}

\subsection{Access Points Modeling}
In the considered system, we assume that {a fixed number of APs form a local network and serve the RF-powered IoT devices.} If the wireless network is built in an infinite area, we always assume that all APs follow a PPP which is a basic point process hypothesis. {But the rural area is usually finite and the funding to deploy ICTs is limited. Therefore, it is better to model the distribution of APs using a BPP \cite{haenggi2012stochastic,7882710}, where the APs are \iid in the finite region $\ncalA$ and the number of APs is $N^t$. The APs are all active and their locations are denoted as a set $\Phi_{AP}\equiv\{\yi\}\subset \nbbR^2$.} Thus, the PDF of any AP located at $\yi \in\Phi_{AP}$ is 
\begin{equation}
f(\yi)=\left\{ 
\begin{array}{cc}
\displaystyle\frac{1}{\pi r_d^2}, & \|\yi-\od\|  \leq r_d, \\
 0, & \mbox{otherwise}.
\end{array}\right.
\end{equation}

\indent For calculations of performance evaluation, we define the distance between AP at $\yi$ and the typical IoT device located at $\xx$ as $D_i$. 

\subsection{Time-slotted Architecture of IoT Devices}
\indent In this system, we assume that the RF signals are the only energy source of IoT devices. {The APs and IoT devices are {assumed to be fully-synchronized}, which can be realized through network time protocol (NTP) or mesh time-synchronization protocol (MTP) \cite{yiugitler2020overview,beke2020time}. These synchronization methods can make the clock accuracy at the millisecond (ms) level, and help the devices switch between charging and transmission modes.} All RF signals are required for communication by IoT devices in each given time slot with $T$ seconds. The energy is harvested and provided in the same time slot as introduced in \cite{6951347,8239664}. As shown in Fig.\ref{fig:Architecture}, we assume that each IoT device adopts a special architecture called the `time-slotted architecture'. In this architecture, each time slot is divided into a charging sub-slot where $T_{ch}=\tau T$ and a transmission sub-slot where $T_{tr}=(1-\tau)T$. The antenna can be used for energy charging or downlink information receiving in their sub-slots, respectively. This system can help us to evaluate the performance of energy harvesting wireless networks in rural areas using two necessary conditions described next.

\begin{figure}[ht]    
        \centering
\includegraphics[width=1\columnwidth]{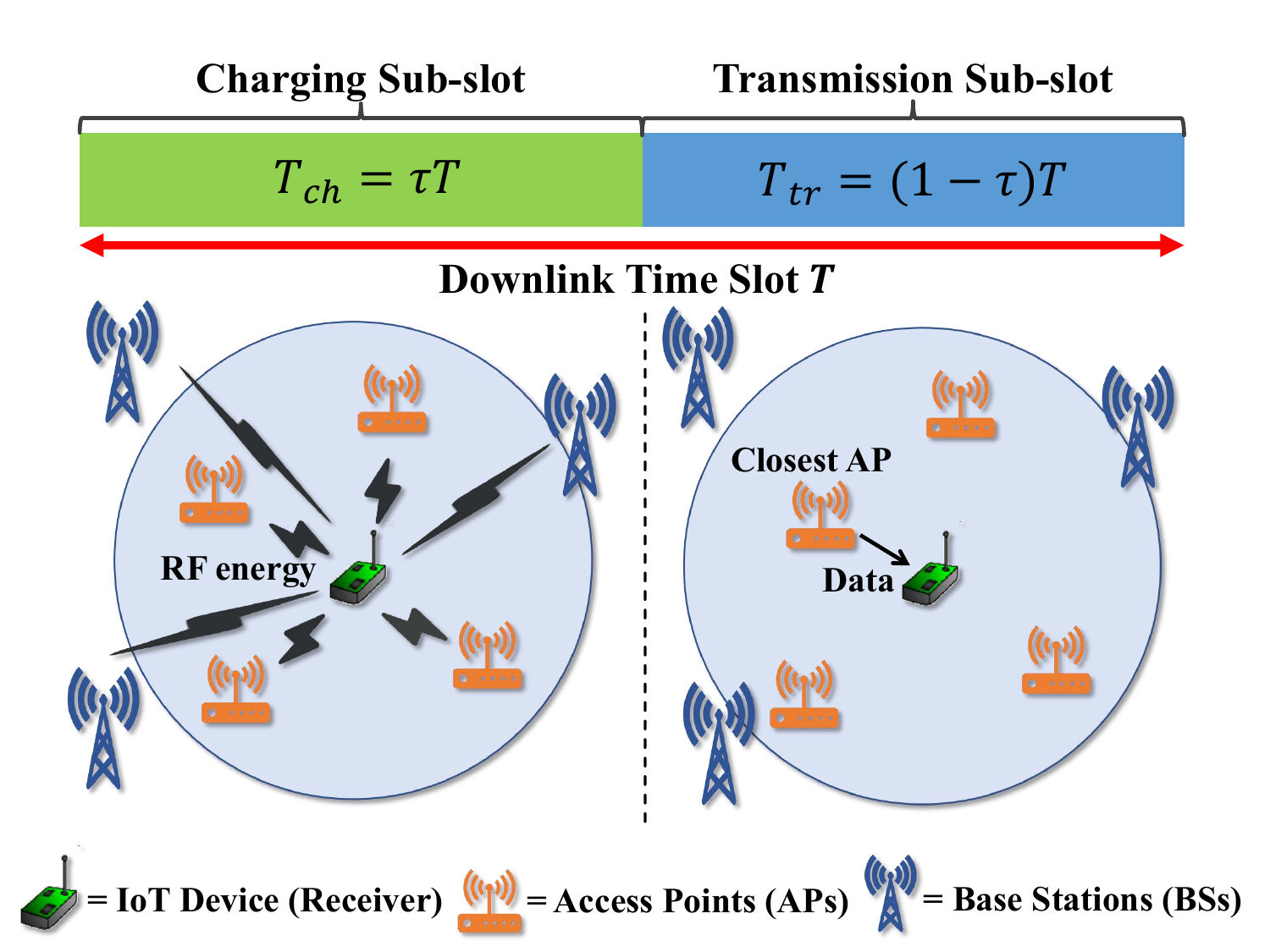}
\caption{Illustration of the IoT devices' time-slotted architecture. In the charging sub-slot, each IoT device harvests all RF signals from APs and BSs. In the transmission sub-slot, the signal from the closest AP is considered as the information source while signals from other APs and BSs are considered as interference.}
\label{fig:Architecture}
\end{figure}

\subsubsection{Charging sub-slot} In the charging sub-slot, all BSs and APs act as RF energy sources for IoT devices. The energy harvested in this sub-slot $E_H$ should be greater than the minimum energy demand $E_{\min}$ to make the transmission succeed, \ie $E_H\geq E_{\min}$, where $E_H$ can be expressed as   
\begin{equation}
    E_H=\tau T \eta (\sum_{\yi\in \Phi_{AP}}p_{AP} \gi D_i^{-\alpha}+\sum_{\wj\in \Phi_{BS}}p_{BS} \gj \xi_j^{-\alpha}),
    \label{EH2}
\end{equation}
where $\eta<1$ is the efficiency of the RF-to-DC conversion. The power of signals received by the IoT device from any of APs or BSs can be represented as $p_{AP} \gi D_i^{-\alpha}$ and $p_{BS} \gj \xi_j^{-\alpha}$, where $\gi,\gj\sim \exp(1)$ model the Rayleigh fading gains, and $D_i^{-\alpha}$, $\xi_i^{-\alpha}$ model the standard power law path-loss with exponent $\alpha>2$. We assume that the distributions and fading gains of APs and BSs are all independent when the location of the typical IoT device is fixed.\\    
\indent {Especially, when the finite area is far away from the city center, \ie $r_c$ is much larger than $\sigma_p$, we can ignore the influence from BSs. Therefore, the expression of harvested energy $E_H$ can be simplified as} 
\begin{equation}
    E_H=\tau T \eta \sum_{\yi\in \Phi_{AP}}p_{AP} \gi D_i^{-\alpha}.
    \label{EH1}
\end{equation}

\subsubsection{Transmission sub-slot} In the transmission sub-slot, we assume that the IoT devices receive the information from their associated serving AP which is chosen from all APs in the finite area $\ncalA=\textbf{b}(\od,r_d)$, where $\textbf{y}_l$ is the location of the chosen AP. Considering both the interference and thermal noise, we achieve the expression of the signal-to-interference-plus-noise ratio ({\rm SINR}) from the serving AP located at $\textbf{y}_l$ to the IoT device located at $\xx$:
\begin{equation}
    {\rm SINR}=\frac{p_{AP} H_{\textbf{y}_l}D_l^{-\alpha}}{I_l+\sigma^2}.
    \label{SINR1}
\end{equation}
\indent In (\ref{SINR1}), the interference $I_l$ is the summation of signals from APs in $\Phi_{AP} \backslash \textbf{y}_l$ and all BSs in $\Phi_{BS}$, where $\Phi_{AP} \backslash \textbf{y}_l$ contains all APs except the chosen one located at $\textbf{y}_l$. Especially, in our assumptions, APs inside the finite area form a local network. In order to obtain effective information in the local network and the largest possible {\rm SINR}, we assume that the typical IoT receives the information from the closest AP located at $\yy$ and substitute $l=1$ into (\ref{SINR1}) to represent the {\rm SINR} in this case. Therefore, the interference can be given as follows:
 \begin{equation}
     I_1=\sum\limits_{\yi \in \Phi_{AP} \backslash \textbf{y}_1}p_{AP} \hi D_i^{-\alpha}+\sum_{\wj\in \Phi_{BS}}p_{BS} \hj \xi_j^{-\alpha},
    \label{interference2}
 \end{equation}
in which $\hi, \hj\sim \exp (1)$ model the Rayleigh fading in this sub-slot and $\sigma^2$ represents the thermal noise power in each IoT device's circuit. If we want to have a successful downlink transmission, the {\rm SINR} should be larger than the threshold $\beta$, \ie ${\rm SINR}\geq \beta$. {Similar to the harvested energy, when the effect of BSs can be ignored, the interference $I_1$ in (\ref{interference2}) is simplified as shown in (\ref{interference1}):}                        
 \begin{equation}
     I_1=\sum\limits_{\yi \in \Phi_{AP} \backslash \textbf{y}_1}p_{AP} \hi D_i^{-\alpha}.
    \label{interference1}
 \end{equation}

\indent {Using the events $E_H\geq E_{\rm min}$ and ${\rm SINR}\geq \beta$, we define the coverage probabilities that can help us study the performance of IoT devices.}
\cite{yiugitler2020overview,beke2020time,gao2021has}
% \begin{definition}[Overall Coverage Probability]The probability that the energy coverage and transmission coverage are both satisfied is defined as: 
% \begin{equation}
%     P_{\rm cov}(\beta)=\E \left[\nbb1 ({\rm SINR}\geq \beta)\nbb1 (E_H\geq E_{\min}) \right].
% \end{equation}
% \end{definition}
% \indent If the communication is successful, the receiving data rate in the transmission sub-slot is $R=W\log (1+\beta)$ bps (following the Shannon Formula), where $W$ is the channel bandwidth. Using the overall coverage probability, the average downlink data rate during transmission sub-slot is
% \begin{equation}
%     R_{\avg}=\E\left[W\log(1+{\rm SINR})\nbb1 ({\rm SINR}\geq \beta)\nbb1 (E_H\geq E_{\min}) \right].
%     \label{Ravg}
% \end{equation}
% \indent Because we only use a fraction $1-\tau$ of the time slot for transmission, the average downlink data rate for the IoT receiver is 
% \begin{equation}
%     D_{avg}=(1-\tau)R_{\avg}.
% \end{equation}

\section{Distance Distribution}
In the considered system model, BSs and APs are both the RF energy sources of IoT devices. The harvested energy and interference are both related to the distance between the IoT device and BSs or APs. {Because we select the closest AP to be the serving one}, it is necessary to analyze the distribution of distances between APs and the typical IoT device.
\subsection{Distance Distribution in a Finite Area}
\indent When we analyze the distribution of APs, it is not convenient to use the distances between the rural center $\od$ and APs directly. To calculate the harvested energy and {\rm SINR}, we have introduced $D_i$ which is the distance between the typical IoT device located at $\xx$ and AP located at $\yi$. {In the considered setup, all elements in $\{D_i:D_i=\|\yi-\xx\|\}$ are correlated since they are all functions of the random variable $\xx$. However, when we condition on $\xx$, and given that the locations of APs $\{\yi\}$ are \iid, the elements in $\{D_i\}$ can be considered \iid, which helps us simplify the calculations.} Because APs follow a BPP, we can calculate the CDF of $D_i$ using the {intersection} of two circular regions: $\ncalA=\textbf{b}(\od,r_d)$ and a circular area centered at $\xx$ with radius $d_i$, which is defined as $\ncalB_i=\textbf{b}(\xx,d_i)$. The CDF of $D_i$ is derived by 
\begin{equation}
    F_{D_i}(d_i)=\P(\yi\in \ncalB_i|\yi\in \ncalA)=\frac{\E\left[\nbb1 (\yi\in \ncalA)\nbb1 (\yi\in \ncalB_i) \right]}{\E\left[\nbb1 (\yi\in \ncalA) \right]}.
\end{equation}
\indent There are two cases: \textit{i) $0\leq d_i\leq r_d-v_0$ where $\ncalB_i\subset \ncalA$}, and \textit{ii) $r_d-v_0<d_i\leq r_d+v_0$ where $\ncalB_i\backslash\ncalA\neq \varnothing$.} Based on this conditional probability, we introduce the distance distribution in a finite area in Lemma \ref{lem:RDD}. 
\begin{lemma}The CDF of the distance between AP located at $\yi$ and the typical IoT device located at $\xx$ is
\begin{equation}
    F_{D_i}(d_i)=\left\{ 
\begin{array}{cr}
F_{D_{i,1}}(d_i), & 0\leq d_i\leq r_d-v_0\\
 F_{D_{i,2}}(d_i), & r_d-v_0<d_i\leq r_d+v_0
\end{array}\right.,
\label{FFDd}
\end{equation}
with $F_{D_{i,1}}(d_i)=\frac{d_i^2}{r_d^2}$ and $F_{D_{i,2}}(d_i)=\frac{d_i^2}{\pi r_d^2}(\theta^*-\frac{1}{2}\sin{2\theta^*})+\frac{1}{\pi}(\phi^*-\frac{1}{2}\sin{2\phi^*})$, where $\theta^*=\arccos{(\frac{d_i^2+v_0^2-r_d^2}{2v_0d_i})}$ and $\phi^*=\arccos{(\frac{v_0^2+r_d^2-d_i^2}{2v_0r_d})}$.\\ 
\indent By taking the derivative of $F_{D_i}(d_i)$ and using the basic algebraic manipulations, the PDF 
$f_{D_i}(d_i)$ can be expressed as:
\begin{equation}
    f_{D_i}(d_i)=\left\{ 
\begin{array}{cr}
f_{D_{i,1}}(d_i), & 0\leq d_i\leq r_d-v_0\\
 f_{D_{i,2}}(d_i), & r_d-v_0<d_i\leq r_d+v_0
\end{array}\right.,
\label{fDd}
\end{equation}
\noindent with $f_{D_{i,1}}(d_i)=\frac{2d_i}{r_d^2}$ and $f_{D_{i,2}}(d_i)=\frac{2d_i}{\pi r_d^2}\arccos{(\frac{d_i^2+v_0^2-r_d^2}{2v_0d_i})}$.
\label{lem:RDD}
\end{lemma}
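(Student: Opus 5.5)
Since $\yi$ is uniform on $\ncalA=\textbf{b}(\od,r_d)$ and independent of $\xx$, the conditional probability defining $F_{D_i}(d_i)$ reduces to the area ratio
\[
F_{D_i}(d_i)=\frac{|\ncalA\cap\ncalB_i|}{\pi r_d^2},
\]
so the proof amounts to computing the area of the intersection of the two disks $\textbf{b}(\od,r_d)$ and $\textbf{b}(\xx,d_i)$. The centres are at distance $v_0$, and we split into the same two cases as in the text. In case \textit{i)}, $0\le d_i\le r_d-v_0$, the triangle inequality gives $\ncalB_i\subset\ncalA$. Hence the intersection area is $\pi d_i^2$, which yields $F_{D_{i,1}}(d_i)=d_i^2/r_d^2$.

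In case \textit{ii)}, $r_d-v_0<d_i\le r_d+v_0$, the two boundary circles meet at two points $P,P'$ symmetric about the $\mathtt{x'-axis}$. The intersection is then the union of two circular segments cut by the chord $PP'$. Let $\theta^*$ be the angle at $\xx$ between the direction of $\od$ and $P$, and $\phi^*$ the angle at $\od$ between the direction of $\xx$ and $P$. Applying the law of cosines to the triangle $\xx\od P$, with sides $v_0,d_i,r_d$, gives exactly $\cos\theta^*=\frac{d_i^2+v_0^2-r_d^2}{2v_0d_i}$ and $\cos\phi^*=\frac{v_0^2+r_d^2-d_i^2}{2v_0r_d}$. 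A circular segment of radius $r$ with half-angle $\vartheta$ has area $r^2(\vartheta-\tfrac12\sin2\vartheta)$, namely the sector $r^2\vartheta$ minus the triangle $r^2\sin\vartheta\cos\vartheta$. Summing the segment of $\ncalB_i$ (radius $d_i$, angle $\theta^*$) and that of $\ncalA$ (radius $r_d$, angle $\phi^*$), then dividing by $\pi r_d^2$, yields $F_{D_{i,2}}$.

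The main point needing care is that this segment formula stays valid when $\theta^*>\pi/2$, i.e.\ when $\od$ lies on the far side of the chord relative to $\xx$. In that case the "segment" is more than a half-disk, but $\vartheta-\tfrac12\sin2\vartheta$ remains the correct signed expression because $\sin 2\vartheta<0$ there. I would check this explicitly, and also verify continuity at both endpoints. At $d_i=r_d-v_0$ we have $\theta^*=\pi$ and $\phi^*=0$, so $F=d_i^2/r_d^2$; at $d_i=r_d+v_0$ we have $\theta^*=\phi^*=0$ plus the full-containment limit, giving $F=1$.

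For the PDF, we differentiate. Case \textit{i)} is immediate. For case \textit{ii)}, rather than differentiating the arccos expressions directly, I would use the geometric fact that $\frac{\rmd}{\rmd d_i}|\ncalA\cap\ncalB_i|$ equals the length of the arc of $\partial\ncalB_i$ lying inside $\ncalA$, which is $2d_i\theta^*$. Hence $f_{D_{i,2}}(d_i)=\frac{2d_i}{\pi r_d^2}\theta^*$. Alternatively, one can confirm this by direct differentiation: differentiating $d_i^2(\theta^*-\tfrac12\sin2\theta^*)$ gives $2d_i(\theta^*-\tfrac12\sin2\theta^*)+d_i^2(1-\cos2\theta^*)\theta^{*\prime}$. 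The remaining terms involving $\theta^{*\prime}$ and $\phi^{*\prime}$ cancel by the relation $d_i\sin\theta^*=r_d\sin\phi^*$ (both equal the half-chord length) together with implicit differentiation of the law of cosines. This cancellation is the only tedious algebra, and the arc-length argument sidesteps it.
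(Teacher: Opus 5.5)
Your proof is correct and follows the paper's own route: compute $F_{D_i}$ as the area of $\textbf{b}(\od,r_d)\cap\textbf{b}(\xx,d_i)$ divided by $\pi r_d^2$, split into the same two cases, then differentiate. Your arc-length shortcut, and the cancellation via $d_i\sin\theta^*=r_d\sin\phi^*$, supply the algebra the paper only summarizes as "basic algebraic manipulations."

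One slip is in your endpoint check. At $d_i=r_d+v_0$ you get $\cos\phi^*=-1$, so $\phi^*=\pi$, not $0$. The formula therefore gives $F=1$ directly, with no separate "full-containment limit" needed; with $\theta^*=\phi^*=0$ it would give $0$.
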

\indent Similar to the typical IoT device, APs are also constrained by spatial conditions. {Therefore, we adopt the `center-edge' distribution to model the distribution of APs and introduce the distance distribution in Corollary \ref{cor:newfDd}.}
\begin{cor}We also adopt the `center-edge distribution' to further limit the distribution of APs. The APs are distributed uniformly and randomly in these two areas: i) {\rm $\ncalA_3=\textbf{b}(\od,R_3)$}, a ring area centered at $\od$ with radius $R_3$, and ii) {\rm $\ncalA\backslash\ncalA_4=\textbf{b}(\od,r_d)\backslash\textbf{b}(\od,R_4)$}, a circular area centered at $\od$ with inner radius $R_4$ and outer radius $r_d$, where $R_4>R_3$. We use another ordered pair $(R_3, R_4)$ to represent the distribution range of APs. We use $D$ to simply represent the independent variable $D_i$. The new PDF of $D_i$ is shown as:
\begin{equation}
    f_D^{CE}(d)=\frac{r_d^2}{r_d^2-R_4^2+R_3^2}(f_D^{\ncalA}(d)-f_D^{\ncalA_4}(d)+f_D^{\ncalA_3}(d)),
\label{newfDd}
\end{equation}
and the new CDF of $D$ is:
\begin{equation}
    F_D^{CE}(d)=\frac{r_d^2}{r_d^2-R_4^2+R_3^2}(F_D^{\ncalA}(d)-F_D^{\ncalA_4}(d)+F_D^{\ncalA_3}(d)),
\label{newFFDd}
\end{equation}
where we introduce the $f_D^{\ncalA_{\ncalK}}(d)$ and $F_D^{\ncalA_{\ncalK}}(d)$ next in Lemma \ref{lem:newRDD}.
\label{cor:newfDd}
\end{cor}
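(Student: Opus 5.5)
The approach is to write the AP density on the center-edge support as a signed combination of uniform densities on disks, and then apply the disk-based distance distribution of Lemma \ref{lem:RDD} to each disk separately. Here $f_D^{\ncalA_{\ncalK}}(d)$ denotes the density of the distance from $\xx$ to a point uniform on the disk $\textbf{b}(\od,R_{\ncalK})$, renormalized by $r_d^2$. In other words,
\[
f_D^{\ncalA_{\ncalK}}(d)=\frac{1}{\pi r_d^2}\frac{\mathrm{d}}{\mathrm{d}d}\,|\textbf{b}(\od,R_{\ncalK})\cap\textbf{b}(\xx,d)|,
\]
with $F_D^{\ncalA_{\ncalK}}$ its integral. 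Lemma \ref{lem:newRDD} will make this precise.

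First, the support is $\ncalS=\ncalA_3\cup(\ncalA\backslash\ncalA_4)$. Since $R_3<R_4<r_d$, the two pieces are disjoint, so $|\ncalS|=\pi(r_d^2-R_4^2+R_3^2)$. The AP location therefore has density
\[
\frac{\nbb1(\yi\in\ncalS)}{\pi(r_d^2-R_4^2+R_3^2)}, \qquad \nbb1_{\ncalS}=\nbb1_{\ncalA}-\nbb1_{\ncalA_4}+\nbb1_{\ncalA_3}.
\]
The indicator identity holds almost everywhere because $\ncalA_3\subset\ncalA_4\subset\ncalA$.

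Second, I compute the CDF exactly as in Lemma \ref{lem:RDD}, as a conditional probability of $\yi\in\ncalB=\textbf{b}(\xx,d)$:
\[
F_D^{CE}(d)=\frac{|\ncalB\cap\ncalA|-|\ncalB\cap\ncalA_4|+|\ncalB\cap\ncalA_3|}{\pi(r_d^2-R_4^2+R_3^2)}.
\]
Multiplying and dividing by $\pi r_d^2$ gives the factor $\frac{r_d^2}{r_d^2-R_4^2+R_3^2}$ times $F_D^{\ncalA}-F_D^{\ncalA_4}+F_D^{\ncalA_3}$. Here $F_D^{\ncalA}$ is exactly the CDF in (\ref{FFDd}). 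Each of $F_D^{\ncalA_{\ncalK}}$ is the area of a disk-disk intersection divided by $\pi r_d^2$. This proves (\ref{newFFDd}). Differentiating term by term in $d$ then gives (\ref{newfDd}), since the prefactor does not depend on $d$.

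The main obstacle is the explicit form of $F_D^{\ncalA_{\ncalK}}$ for $\ncalK\in\{3,4\}$. Unlike $\ncalA$, the typical device may lie outside $\textbf{b}(\od,R_{\ncalK})$, that is, $v_0>R_{\ncalK}$. I would therefore split into two cases.

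\emph{Case $v_0\le R_{\ncalK}$.} Lemma \ref{lem:RDD} applies with $r_d$ replaced by $R_{\ncalK}$:
\begin{itemize}
\item for $d\le R_{\ncalK}-v_0$, the intersection area is $\pi d^2$;
\item for $R_{\ncalK}-v_0<d\le R_{\ncalK}+v_0$, it is the lens formula with $\theta^*,\phi^*$ built from $R_{\ncalK}$;
\item for larger $d$, it equals $\pi R_{\ncalK}^2$.
\end{itemize}

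\emph{Case $v_0>R_{\ncalK}$.} The intersection is empty for $d<v_0-R_{\ncalK}$. On $[v_0-R_{\ncalK},v_0+R_{\ncalK}]$ it is given by the same lens formula. Beyond that it equals $\pi R_{\ncalK}^2$.

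Differentiating the lens area gives $2d\,\theta^*/r_d^2$ after normalization. This follows because the derivative of an intersection area with respect to $d$ is the length of the arc of $\partial\textbf{b}(\xx,d)$ lying inside the disk, which yields the arccos expression valid in both cases. Checking continuity at the breakpoints completes the argument.
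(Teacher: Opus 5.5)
Your proposal is correct and follows the same route the paper takes implicitly (the corollary is stated without a separate proof). You write the AP support as $\ncalA_3\cup(\ncalA\backslash\ncalA_4)$, express $F_D^{CE}$ as a ratio of disk--disk intersection areas as in Lemma~\ref{lem:RDD}, factor out $\pi r_d^2$ to get the prefactor, differentiate, and evaluate each $F_D^{\ncalA_{\ncalK}}$ through the split $v_0>R_{\ncalK}$ versus $v_0<R_{\ncalK}$, which is what Lemma~\ref{lem:newRDD} encodes in its six regions.

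One slip to fix: with your own $\frac{1}{\pi r_d^2}$ normalization, the derivative of the lens area is $\frac{2d\,\theta^*}{\pi r_d^2}$ (matching (\ref{fDd})), not $2d\,\theta^*/r_d^2$. That same normalization also correctly gives $d^2/r_d^2$ and $R_{\ncalK}^2/r_d^2$ in the inner-disk and full-containment regions. Lemma~\ref{lem:newRDD} prints $d^2/(\pi r_d^2)$ and $R_{\ncalK}^2/(\pi r_d^2)$ there, but your values are the ones that continuity at the breakpoints and $F_D^{CE}(r_d+v_0)=1$ require.
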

After adopting the `center-edge' distribution of APs, the spatial area is divided into several parts. In Lemma \ref{lem:newRDD}, we simplify the expressions of updated PDF and CDF of $D_i$ in different domains.
\begin{lemma}Considering the positional relationship between {\rm $\ncalA_{\ncalK}=\textbf{b}(\od,R_{\ncalK})(\ncalK=3,4)$} and {\rm $\ncalB={\textbf{b}}(\xx,d)$}, we divide the entire domain of rural area into six parts shown in Fig.\ref{fig:F123456}:
\begin{equation}
    \left\{
\begin{array}{rr@{}l}
     \rmnuma_{\ncalK}:&0<\;&d<v_0-R_{\ncalK}\\
     \rmnumb_{\ncalK}:&v_0-R_{\ncalK}<\;&d<v_0+R_{\ncalK}\\
     \rmnumc_{\ncalK}:&v_0+R_{\ncalK}<\;&d<v_0+r_d\\     
\end{array}
    \right.,{\rm if}\; 0<R_{\ncalK}<v_0<r_d,
\end{equation}
and
\begin{equation}
    \left\{
\begin{array}{rr@{}l}
     \rmnumd_{\ncalK}:&0<\;&d<R_{\ncalK}-v_0\\
     \rmnume_{\ncalK}:&R_{\ncalK}-v_0<\;&d<R_{\ncalK}+v_0\\
     \rmnumf_{\ncalK}:&R_{\ncalK}+v_0<\;&d<r_d+v_0\\     
\end{array}
    \right.,{\rm if}\;0<v_0<R_{\ncalK}<r_d.
\end{equation}

\indent The $F_D^{\ncalA_{\ncalK}}(d)$ can be expressed as:
\begin{equation}
F_D^{\ncalA_{\ncalK}}(d)=\left\{
\begin{array}{cc}
     \rmnuma_{\ncalK}:&0\\
     \rmnumb_{\ncalK},\rmnume_{\ncalK}:&\ncalF(d,{R_{\ncalK}})\\
     \rmnumc_{\ncalK},\rmnumf_{\ncalK}:&\displaystyle\frac{R_{\ncalK}^2}{\pi r_d^2}\\
     \rmnumd_{\ncalK}:&\displaystyle\frac{d^2}{\pi r_d^2}\\
\end{array}
    \right.,
\end{equation}
where 
\begin{equation}
\ncalF(d,{R_{\ncalK}})=\frac{d^2}{\pi r_d^2}(\theta_{\ncalK}-\frac{1}{2}\sin 2\theta_{\ncalK})+\frac{R_{\ncalK}^2}{\pi r_d^2}(\phi_{\ncalK}-\frac{1}{2}\sin 2\phi_{\ncalK}),
\end{equation}
 and 
 \begin{equation}
 \left\{
 \begin{array}{l}
 \theta_\ncalK=\displaystyle\arccos(\frac{d^2+v_0^2-R_{\ncalK}^2}{2v_0d}) \\\phi_{\ncalK}=\displaystyle\arccos(\frac{v_0^2+R_{\ncalK}^2-d^2}{2v_0R_{\ncalK}})
 \end{array}.
 \right.
 \end{equation}
\indent The $f_D^{\ncalA_{\ncalK}}(d)$ is:
\begin{equation}
f_D^{\ncalA_{\ncalK}}(d)=\left\{
\begin{array}{cc}
     \rmnuma_{\ncalK},\rmnumc_{\ncalK},\rmnumf_{\ncalK}:&0\\
     \rmnumb_{\ncalK},\rmnume_{\ncalK}:&\displaystyle\frac{2d}{\pi r_d^2}\theta_{\ncalK}\\
     \rmnumd_{\ncalK}:&\displaystyle\frac{2d}{ r_d^2}
\end{array}.
    \right.
\end{equation}
    \label{lem:newRDD}
\end{lemma}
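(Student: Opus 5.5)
The plan is to read $F_D^{\ncalA_{\ncalK}}(d)$ as an unnormalized probability: the probability that an AP, uniform on $\ncalA=\textbf{b}(\od,r_d)$, lies both in $\ncalA_{\ncalK}=\textbf{b}(\od,R_{\ncalK})$ and in $\ncalB=\textbf{b}(\xx,d)$. Since the AP density is $1/(\pi r_d^2)$ on $\ncalA$ and $\ncalA_{\ncalK}\subset\ncalA$, this gives
\[
F_D^{\ncalA_{\ncalK}}(d)=\frac{|\ncalA_{\ncalK}\cap\ncalB|}{\pi r_d^2}.
\]
This normalization is exactly what makes the combination $F_D^{\ncalA}-F_D^{\ncalA_4}+F_D^{\ncalA_3}$, rescaled by $r_d^2/(r_d^2-R_4^2+R_3^2)$, in Corollary \ref{cor:newfDd} a valid CDF. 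The lemma then reduces to computing the area of the intersection of two discs with radii $d$ and $R_{\ncalK}$ whose centers are a distance $v_0$ apart, as a function of $d$. Afterwards we differentiate in $d$. This is the same argument that underlies Lemma \ref{lem:RDD}, which is the special case $R_{\ncalK}=r_d$.

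\textbf{Case split by the relative position of the two circles.}
\begin{itemize}
\item If $R_{\ncalK}<v_0$, the point $\xx$ lies outside $\ncalA_{\ncalK}$.
  \begin{itemize}
  \item For $d<v_0-R_{\ncalK}$ the discs are disjoint, so the area is $0$ (region $\rmnuma_{\ncalK}$).
  \item For $d>v_0+R_{\ncalK}$, $\ncalA_{\ncalK}\subset\ncalB$, so the area is $\pi R_{\ncalK}^2$ (region $\rmnumc_{\ncalK}$).
  \end{itemize}
\item If $v_0<R_{\ncalK}$, the point $\xx$ lies inside $\ncalA_{\ncalK}$.
  \begin{itemize}
  \item For $d<R_{\ncalK}-v_0$, $\ncalB\subset\ncalA_{\ncalK}$, so the area is $\pi d^2$ (region $\rmnumd_{\ncalK}$).
  \item For $d>R_{\ncalK}+v_0$, $\ncalA_{\ncalK}\subset\ncalB$, so the area is $\pi R_{\ncalK}^2$ (region $\rmnumf_{\ncalK}$).
  \end{itemize}
\end{itemize}
In the remaining ranges, $\rmnumb_{\ncalK}$ and $\rmnume_{\ncalK}$, the two circles cross at two points. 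Let $\theta_{\ncalK}$ and $\phi_{\ncalK}$ be the half-angles subtended by the common chord at $\xx$ and at $\od$, respectively. The law of cosines in the triangle formed by $\xx$, $\od$ and an intersection point gives exactly the stated $\arccos$ expressions. The lens is the union of two circular segments, so its area is
\[
d^2\Bigl(\theta_{\ncalK}-\tfrac12\sin2\theta_{\ncalK}\Bigr)+R_{\ncalK}^2\Bigl(\phi_{\ncalK}-\tfrac12\sin2\phi_{\ncalK}\Bigr).
\]
This formula stays valid when an angle exceeds $\pi/2$, since the segment area formula covers reflex segments. Dividing by $\pi r_d^2$ yields $\ncalF(d,R_{\ncalK})$. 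Continuity at the case boundaries ($\theta_{\ncalK},\phi_{\ncalK}\in\{0,\pi\}$) is a quick consistency check.

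\textbf{Differentiation.} The constant pieces give density $0$, and the piece $d^2/(\pi r_d^2)$ gives $2d/(\pi r_d^2)$. The main step, and the only real obstacle, is differentiating the lens area. A direct derivative of the $\arccos$ terms is messy. Instead I would use the geometric relations
\[
d\sin\theta_{\ncalK}=R_{\ncalK}\sin\phi_{\ncalK}=h
\quad\text{(half-chord)},\qquad
d\cos\theta_{\ncalK}+R_{\ncalK}\cos\phi_{\ncalK}=v_0 .
\]
Differentiating the second relation in $d$ gives
\[
d\sin\theta_{\ncalK}\,\theta_{\ncalK}'+R_{\ncalK}\sin\phi_{\ncalK}\,\phi_{\ncalK}'=\cos\theta_{\ncalK},
\]
and multiplying by $2h$ gives
\[
2d^2\sin^2\theta_{\ncalK}\,\theta_{\ncalK}'+2R_{\ncalK}^2\sin^2\phi_{\ncalK}\,\phi_{\ncalK}'=d\sin2\theta_{\ncalK}.
\]
On the other hand, the derivative of the lens area is
\[
2d\theta_{\ncalK}-d\sin2\theta_{\ncalK}+2d^2\sin^2\theta_{\ncalK}\,\theta_{\ncalK}'+2R_{\ncalK}^2\sin^2\phi_{\ncalK}\,\phi_{\ncalK}'.
\]
By the previous identity, the last three terms cancel and leave $2d\theta_{\ncalK}$. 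This is simply the length of the arc of $\partial\ncalB$ lying inside $\ncalA_{\ncalK}$. Dividing by $\pi r_d^2$ gives $f_D^{\ncalA_{\ncalK}}(d)=\frac{2d}{\pi r_d^2}\theta_{\ncalK}$ in regions $\rmnumb_{\ncalK}$ and $\rmnume_{\ncalK}$, which completes the lemma.
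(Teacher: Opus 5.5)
Your method is correct and matches what the paper relies on. The paper gives no separate proof of this lemma; as for Lemma~\ref{lem:RDD}, $F_D^{\mathcal{A}_{\mathcal{K}}}$ is the area of $\mathcal{A}_{\mathcal{K}}\cap\mathcal{B}$ divided by $\pi r_d^2$, and the density is its derivative. Your case split and your two-segment lens formula are correct. Your derivative of the lens area via $d\sin\theta_{\mathcal{K}}=R_{\mathcal{K}}\sin\phi_{\mathcal{K}}$ and $d\cos\theta_{\mathcal{K}}+R_{\mathcal{K}}\cos\phi_{\mathcal{K}}=v_0$ is also correct, and it is cleaner than differentiating the $\arccos$ terms directly.

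The slip is in the non-lens pieces. Dividing your areas $\pi R_{\mathcal{K}}^2$ and $\pi d^2$ by $\pi r_d^2$ gives $R_{\mathcal{K}}^2/r_d^2$ in regions III and VI and $d^2/r_d^2$ in region IV. The lemma instead prints $R_{\mathcal{K}}^2/(\pi r_d^2)$ and $d^2/(\pi r_d^2)$, and these are misprints. The checks you invoke yourself expose this:
\begin{itemize}
\item Continuity gives $\mathcal{F}(v_0+R_{\mathcal{K}},R_{\mathcal{K}})=R_{\mathcal{K}}^2/r_d^2$, since there $\theta_{\mathcal{K}}=0$ and $\phi_{\mathcal{K}}=\pi$.
\item Continuity also gives $\mathcal{F}(R_{\mathcal{K}}-v_0,R_{\mathcal{K}})=(R_{\mathcal{K}}-v_0)^2/r_d^2$, since there $\theta_{\mathcal{K}}=\pi$ and $\phi_{\mathcal{K}}=0$.
\item The lemma's own density in region IV is $2d/r_d^2$, not $2d/(\pi r_d^2)$.
\item Your normalization remark about Corollary~\ref{cor:newfDd} needs $F_D^{\mathcal{A}_{\mathcal{K}}}=R_{\mathcal{K}}^2/r_d^2$ at $d=r_d+v_0$ in order to get $F_D^{CE}(r_d+v_0)=1$.
\end{itemize}
Instead of flagging the misprint, you differentiated the misprinted piece and concluded that region IV has density $2d/(\pi r_d^2)$. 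That contradicts the stated $f_D^{\mathcal{A}_{\mathcal{K}}}(d)=2d/r_d^2$, so as written your proof does not establish that case. State the corrected CDF values explicitly and differentiate $d^2/r_d^2$ to get $2d/r_d^2$; with that fix the argument is complete.
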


\begin{remark}The expressions of $F_D^{\ncalA}(d)$ and $f_D^{\ncalA}(d)$ are the same as (\ref{FFDd}) and (\ref{fDd}). Especially, $\ncalF(d,r_d)=F_{D_{i,2}}(d)$ in Lemma \ref{lem:RDD}.
\label{rem:newfDd}
\end{remark}

 \begin{figure}
    \centering
    \includegraphics[width=0.7\columnwidth]{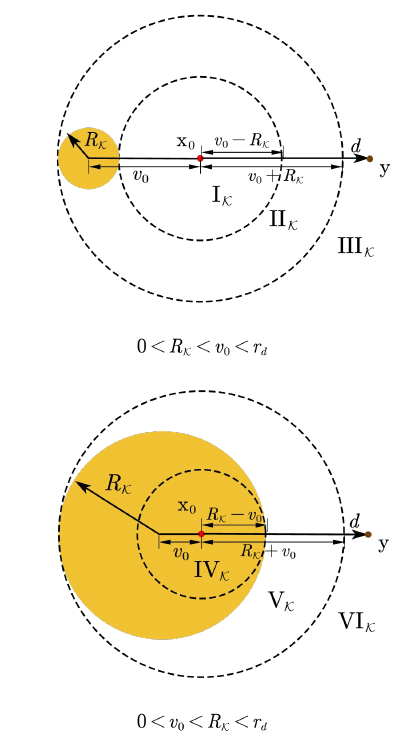}
    \caption{Illustration of distance distribution introduced in Lemma 2.}
    \label{fig:F123456}
\end{figure}
% \subsection{Closest Distance Distribution}
\subsection{Conditional Distance Distribution in a Finite Area}
\indent When we analyze the coverage performance, we define the distance between the closest AP to the typical IoT device as a random variable $R$. If the APs' distribution and location of the typical IoT device are known, $R=r=\|\xx-\yy\|$, where $\yy$ is the location of the closest AP. {In a finite area, we have introduced the BPP to model the distribution of APs,} and some studies like \cite{7882710} have introduced the distribution of the $k^{th}$ closest AP for a typical IoT device. The PDF of the distance between the $k^{th}$ closest AP and the typical IoT device can be expressed as:
\begin{equation}
    \begin{array}{@{}r@{}l}
    f_{R }^{(k)}(r)&=\frac{N^t!}{(k-1)!(N^t-k)!}F_{D_{i }}(r)^{k-1}f_{D_{i }}(r)(1-F_{D_{i }}(r))^{N^t-k}.%\displaystyle
    \end{array}
\label{fkRr}
\end{equation}
%\\&\ \ \ \ \ \ \ \ \ \ \ \ \ \ \ \ \ \ \ \ \ \ \ \ \ \ \ \ \times 
\indent To achieve the distribution of $R$, we substitute $k=1$ into (\ref{fkRr}) and achieve the distribution of the distance between the closest AP and the typical IoT device in Lemma \ref{lem:CDD}.

\begin{lemma} The PDF of the distance between the typical IoT device located at $\xx$ and its closest AP located at $\yy$ is:
\begin{equation}
f_{R }(r)=N^t f_{D_{i }}(r)(1-F_{D_{i }}(r))^{N^t-1},
\label{fRr}
\end{equation}
where $f_{D_i}(d_i)$ and $F_{D_i}(d_i)$ are given in (\ref{fDd}) and (\ref{FFDd}), respectively.
\label{lem:CDD}
\end{lemma}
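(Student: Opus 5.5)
Conditioned on the location $\xx$ of the typical IoT device (equivalently on $v_0$), the distances $D_1,\dots,D_{N^t}$ from the $N^t$ APs of the BPP $\Phi_{AP}$ to $\xx$ are \iid with common CDF $F_{D_i}$ and PDF $f_{D_i}$ given by Lemma \ref{lem:RDD}. This is the observation made just before that lemma. The distance to the closest AP is then $R=\min_{1\le i\le N^t} D_i$, so the plan is to find the CDF of $R$ through its complementary CDF and then differentiate.

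\textbf{Step 1 (CCDF).} For $r\in[0,r_d+v_0]$, the event $\{R>r\}$ occurs exactly when every AP lies outside $\ncalB=\textbf{b}(\xx,r)$. By conditional independence,
\begin{equation*}
\P(R>r)=\prod_{i=1}^{N^t}\P(D_i>r)=\big(1-F_{D_i}(r)\big)^{N^t}.
\end{equation*}
Hence $F_R(r)=1-(1-F_{D_i}(r))^{N^t}$, with $F_R(r)=0$ for $r<0$ and $F_R(r)=1$ for $r\ge r_d+v_0$.

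\textbf{Step 2 (PDF).} Differentiating in $r$ with the chain rule gives $f_R(r)=N^t f_{D_i}(r)(1-F_{D_i}(r))^{N^t-1}$, which is (\ref{fRr}). Equivalently, one can set $k=1$ in the order-statistics formula (\ref{fkRr}). The factor $F_{D_i}(r)^{0}$ disappears and the coefficient reduces to $N^t!/(0!\,(N^t-1)!)=N^t$. Either route gives the same result.

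\textbf{Main obstacle.} The only delicate point is justifying the differentiation across the breakpoint $r=r_d-v_0$, where $F_{D_i}$ switches from $F_{D_{i,1}}$ to $F_{D_{i,2}}$. I will check that $F_{D_i}$ is continuous there: at $d_i=r_d-v_0$ we get $\theta^*=0$ and $\phi^*=\pi$, so $F_{D_{i,2}}=\frac{1}{\pi}\cdot\pi\cdot\frac{d_i^2}{r_d^2}\cdot 0+1$. That looks wrong at first sight. The resolution is to recall that the intended geometry gives $\theta^*=\pi$ and $\phi^*=0$ at that endpoint, and then $F_{D_{i,2}}=d_i^2/r_d^2=F_{D_{i,1}}$. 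So $F_{D_i}$ is continuous and piecewise $C^1$. It follows that $F_R$ is absolutely continuous, and its derivative, taken piecewise on each of the two intervals of Lemma \ref{lem:RDD}, is a valid density; the value at a single breakpoint does not matter. Substituting the piecewise expressions (\ref{fDd}) and (\ref{FFDd}) then gives the explicit form of $f_R$ on each interval.
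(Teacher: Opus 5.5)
Your proposal is correct and takes essentially the paper's route: the paper obtains (\ref{fRr}) by setting $k=1$ in the order-statistics formula (\ref{fkRr}), and your computation $\P(R>r)=(1-F_{D_i}(r))^{N^t}$ followed by differentiation is just the standard derivation of that $k=1$ case. One small fix: substituting $d_i=r_d-v_0$ directly into the stated formulas already gives $\theta^*=\arccos(-1)=\pi$ and $\phi^*=\arccos(1)=0$, so you need no appeal to ``intended geometry''; your first values were a miscalculation, and continuity of $F_{D_i}$ follows from the expressions as written.
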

% \begin{IEEEproof}
%     The closest distance distribution is the result after substituting $k=1$ into (\ref{fkRr}).
% \end{IEEEproof}
\begin{figure}[ht]    
        \centering
\includegraphics[width=1\columnwidth]{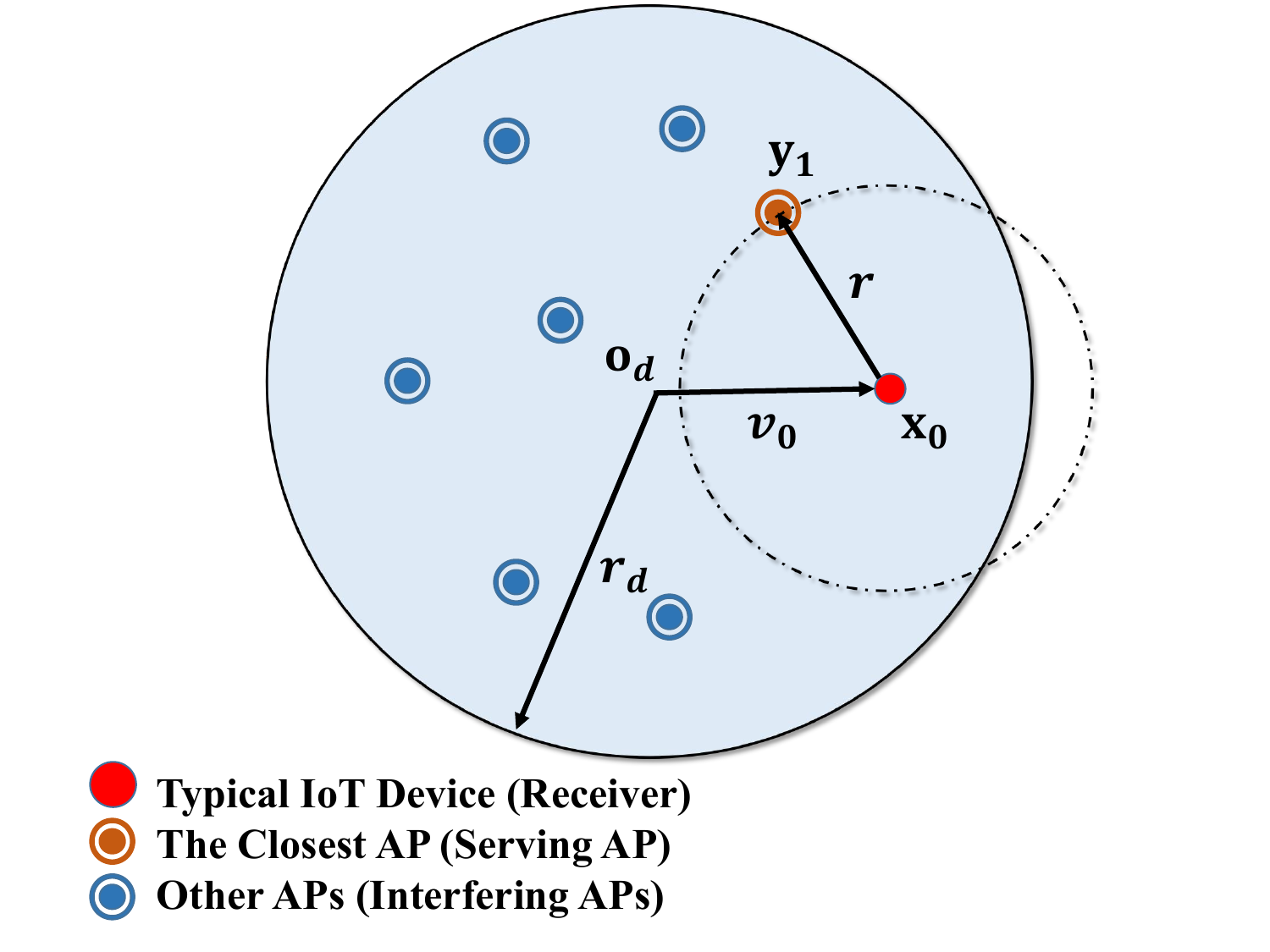}
\caption{Illustration of the BPP model and the distance distribution. Because we assume that the typical IoT device can receive the signal from the closest AP, other APs are regarded as interfering APs. Lemma \ref{lem:CDD} introduces the distribution of the distance between the typical IoT device and the closest AP, and Lemma \ref{lem:CRDD} introduces the distribution of $D_i$ conditioned on $R=r$.}
\label{fig:BPP}
\end{figure}

\indent We also update the PDF of $R$ when adopting the `center-edge' distribution in Corollary \ref{cor:newfRr}.
\begin{cor}When we adopt the assumption in Corollary \ref{cor:newfDd}, the new PDF of the distance between the typical IoT device located at $\xx$ and its closest AP located at $\yy$ is
\begin{equation}
    f_R^{CE}(r)=N^t f_D^{CE}(r)(1-F_D^{CE}(r))^{N^t-1},
\label{newfRr}
\end{equation}
where the $f_D^{CE}(d)$ and $F_D^{CE}(d)$ are shown in (\ref{newfDd}) and (\ref{newFFDd}).
\label{cor:newfRr}
\end{cor}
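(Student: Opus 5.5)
We prove Corollary \ref{cor:newfRr} with the same order-statistics argument that gives Lemma \ref{lem:CDD}, after checking that the `center-edge' placement of APs keeps the \iid structure it relies on.

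First, under the assumption of Corollary \ref{cor:newfDd}, each AP is placed independently and uniformly on $\ncalA_3\cup(\ncalA\backslash\ncalA_4)$. The typical IoT device at $\xx$ is fixed, so the distances $D_1,\dots,D_{N^t}$ are \iid. Their common CDF and PDF are $F_D^{CE}$ and $f_D^{CE}$ from (\ref{newFFDd}) and (\ref{newfDd}). We take Corollary \ref{cor:newfDd} as given. Its formulas come from writing the probability that $\yi\in\ncalB=\textbf{b}(\xx,d)$ as an area ratio, with denominator $\pi(r_d^2-R_4^2+R_3^2)$ and numerator $|\ncalB\cap\ncalA|-|\ncalB\cap\ncalA_4|+|\ncalB\cap\ncalA_3|$. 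Normalizing each area by $\pi r_d^2$ gives the prefactor $r_d^2/(r_d^2-R_4^2+R_3^2)$.

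Next, the closest-AP distance is $R=\min_i D_i$. By independence,
\[
\P(R>r)=\prod_{i=1}^{N^t}\P(D_i>r)=(1-F_D^{CE}(r))^{N^t}.
\]
Differentiating $1-(1-F_D^{CE}(r))^{N^t}$ in $r$ gives
\[
f_R^{CE}(r)=N^t f_D^{CE}(r)(1-F_D^{CE}(r))^{N^t-1},
\]
which is (\ref{newfRr}). Equivalently, this is (\ref{fkRr}) with $k=1$ and the CE marginal in place of the uniform-disk marginal.

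The only delicate point is that $F_D^{CE}$ is piecewise, with pieces from Lemma \ref{lem:newRDD} over the regions $\rmnuma_{\ncalK}$–$\rmnumf_{\ncalK}$ and Lemma \ref{lem:RDD}. We handle this in two steps.

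\begin{itemize}
\item \textbf{Continuity.} $F_D^{CE}$ is a CDF of an absolutely continuous variable, so it is continuous. Hence $(1-F_D^{CE})^{N^t}$ is continuous, and its derivative exists except at finitely many breakpoints.
\item \textbf{Matching the density.} On each open interval between breakpoints, the derivative of $F_D^{CE}$ equals $f_D^{CE}$. Derivatives that differ on a finite set still define the same density.
\end{itemize}

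So no further obstacle arises beyond relying on Corollary \ref{cor:newfDd}.
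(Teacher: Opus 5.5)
Your proof is correct and follows the paper's route: the paper gets (\ref{newfRr}) by taking $k=1$ in the order-statistics formula (\ref{fkRr}) with the center-edge marginals $F_D^{CE}$, $f_D^{CE}$ in place of the uniform-disk ones, and your computation of $\P(R>r)=(1-F_D^{CE}(r))^{N^t}$ followed by differentiation is exactly that. One caution: your continuity and density-matching checks hold for the area-ratio CDF you describe, $F_D^{\ncalA_{\ncalK}}(d)=|\ncalB\cap\ncalA_{\ncalK}|/(\pi r_d^2)$, which equals $R_{\ncalK}^2/r_d^2$ in regions $\mathrm{III}_{\ncalK},\mathrm{VI}_{\ncalK}$ and $d^2/r_d^2$ in $\mathrm{IV}_{\ncalK}$, not for the printed $R_{\ncalK}^2/(\pi r_d^2)$ and $d^2/(\pi r_d^2)$ of Lemma \ref{lem:newRDD}, which would jump at the region boundaries and, in $\mathrm{IV}_{\ncalK}$, disagree with $f_D^{\ncalA_{\ncalK}}=2d/r_d^2$.
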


\indent To simplify the calculation in coverage probability analysis, we need to define a new distribution of $D_i$ conditioned on $R=r$ using a new variant $U$. The conditional PDF of $D_i$ is expressed as $f_U(u)$ in Lemma \ref{lem:CRDD}.
\begin{lemma}[Conditional Distance Distribution in a Finite Area]The PDF of $D_i$ conditioned on $R=r$ is
\begin{equation}
    \begin{array}{@{}r@{}l}
    f_U&(u)=\displaystyle\frac{f_{D_i(u)}}{1-F_{D_i}(r)}\\&=\left\{
    \begin{array}{rrr}
     \displaystyle\frac{f_{D_{i,1}}(u)}{1-F_{D_{i,1}}(r)},& 0<r<d^-,& r<u<d^-\\
     \displaystyle\frac{f_{D_{i,2}}(u)}{1-F_{D_{i,1}}(r)},& 0<r<d^-,& d^-<u<d^+\\
     \displaystyle\frac{f_{D_{i,2}}(u)}{1-F_{D_{i,2}}(r)},& d^-<r<d^+,& r<u<d^+
    \end{array}
    \right.,
    \end{array}
    \label{fUu}
\end{equation}
where $d^-=r_d-v_0$ and $d^+=r_d+v_0$. $F_{D_i}(d)$ and $f_{D_i}(d)$ are given in (\ref{FFDd}) and (\ref{fDd}).
\label{lem:CRDD}
\end{lemma}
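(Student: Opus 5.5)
We prove Lemma \ref{lem:CRDD} by conditioning on the location $\xx$ of the typical IoT device, so that $v_0$ is fixed. Given $\xx$, the distances $\{D_i\}_{i=1}^{N^t}$ are \iid with CDF $F_{D_i}$ and PDF $f_{D_i}$ from Lemma \ref{lem:RDD}. The plan has three parts: derive the conditional law of a non-serving distance, identify the truncated density, and then substitute the piecewise forms.

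First we derive the conditional law. Condition on the event that the closest AP is at distance $R=r$. By exchangeability of the BPP, the joint density of the ordered distances shows the following: given $R=r$, the remaining $N^t-1$ distances are \iid and each is distributed as $D_i$ conditioned on $D_i>r$. Equivalently, factor the joint density of $(D_{(1)},\dots)$ in the form $N^t f_{D_i}(r)\prod_{j\neq 1} f_{D_i}(u_j)\nbb1(u_j>r)$. Dividing by the marginal $f_R(r)$ of Lemma \ref{lem:CDD} leaves the product of terms
\[
\frac{f_{D_i}(u_j)\nbb1(u_j>r)}{1-F_{D_i}(r)}.
\]
This gives the first equality $f_U(u)=f_{D_i}(u)/(1-F_{D_i}(r))$ for $u>r$. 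It integrates to one over $(r,d^+]$, since $F_{D_i}(d^+)=1$.

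Second, we split into cases using the two-piece form of Lemma \ref{lem:RDD}, with breakpoint $d^-=r_d-v_0$ and support endpoint $d^+=r_d+v_0$.
\begin{itemize}
\item If $0<r<d^-$, the denominator uses $F_{D_{i,1}}(r)=r^2/r_d^2$. The numerator uses $f_{D_{i,1}}(u)$ for $r<u<d^-$ and $f_{D_{i,2}}(u)$ for $d^-<u<d^+$.
\item If $d^-<r<d^+$, then $u>r>d^-$, so only $f_{D_{i,2}}(u)$ and $F_{D_{i,2}}(r)$ appear.
\end{itemize}
These are exactly the three rows of (\ref{fUu}).

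The only real subtlety is the first step: justifying that conditioning on the minimum yields independent truncated distances rather than something correlated. Continuity of $F_{D_i}$ makes ties have probability zero, which is enough for the order-statistic argument. Continuity at $u=d^-$ does not need to be checked separately, because $f_{D_{i,2}}(d^-)$ agrees with $f_{D_{i,1}}(d^-)$: at that point $\theta^*=\pi$. This makes the piecewise definition consistent.
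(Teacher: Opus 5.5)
Your proposal is correct and is essentially the paper's argument. The paper invokes Bayes' formula with the observation that, given $R=r$, every other AP lies outside the disk $\mathbf{b}(\mathbf{x}_0,r)$; this yields the truncated density $f_{D_i}(u)/(1-F_{D_i}(r))$ for $u>r$, and the three cases then follow from Lemma~\ref{lem:RDD}. Your order-statistics factorization, dividing by $f_R(r)$ from Lemma~\ref{lem:CDD}, simply makes explicit the conditional independence of the non-serving distances that the paper takes for granted.
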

\begin{IEEEproof}
Because the condition is $R=r$, we can use the Bayesian Probability Formula to achieve the conditional probability easily. {Similar to the concept in \cite{7882710}, we consider the situation that all APs are outside the circle centered at $\xx$ with radius $r$  because we choose the closest AP for downlink transmission.}
\end{IEEEproof}
Similar to Corollary \ref{cor:newfDd} and Corollary \ref{cor:newfRr}, the Corollary \ref{cor:newfUu} describes the expression of new PDF of $D_i$ conditioned on $R=r$.  
\begin{cor}
    Following the assumptions in Corollary \ref{cor:newfDd}, \ref{cor:newfRr} and Lemma \ref{lem:CRDD}, the APs are distributed following the `center-edge distribution'. In this case, $f_U(u)$ can be expressed as:
\begin{equation}
    f_U^{CE}(u)=\frac{f_D^{CE}(d)}{1-F_D^{CE}(r)},
\label{newfUu}
\end{equation}
where $f_D^{CE}(d)$ and $F_D^{CE}(d)$ are given in (\ref{newfDd}) and (\ref{newFFDd}).
\label{cor:newfUu}
\end{cor}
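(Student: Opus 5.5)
The statement to prove is Corollary~\ref{cor:newfUu}, i.e. (\ref{newfUu}). I would follow the same route as in the proof of Lemma~\ref{lem:CRDD}, now with the `center-edge' AP distribution. Throughout, I condition on the device location $\xx$, so that $v_0$ is fixed. The AP locations $\{\yi\}$ are then \iid with common support $\ncalA_3\cup(\ncalA\backslash\ncalA_4)$ and uniform density $1/(\pi(r_d^2-R_4^2+R_3^2))$. Hence the distances $\{D_i\}$ are \iid with PDF $f_D^{CE}$ and CDF $F_D^{CE}$ from Corollary~\ref{cor:newfDd}.

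The first step is to identify the conditioning event. Given $R=r$, one AP (the closest) sits at distance exactly $r$. Each of the remaining $N^t-1$ APs is independently constrained to lie outside $\textbf{b}(\xx,r)$, that is, $D_i>r$. This is the same argument used for Corollary~\ref{cor:newfRr}, which produced the factor $(1-F_D^{CE}(r))^{N^t-1}$. Two facts justify it: the event $\{R=r\}$ factorizes across the \iid points, and by exchangeability the index of the closest AP carries no information about the others beyond $D_i>r$.

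The second step applies Bayes' formula to a single interfering AP. For $u>r$,
\[
\P(D_i\le u \mid D_i>r)=\frac{F_D^{CE}(u)-F_D^{CE}(r)}{1-F_D^{CE}(r)} .
\]
Differentiating in $u$ gives $f_U^{CE}(u)=f_D^{CE}(u)/(1-F_D^{CE}(r))$ for $u>r$, and $f_U^{CE}(u)=0$ otherwise. This is (\ref{newfUu}), with the $d$ in the numerator read as the argument $u$.

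The only delicate point is the piecewise structure. Unlike Lemma~\ref{lem:CRDD}, I would not split explicitly into the intervals $\rmnuma_{\ncalK}$--$\rmnumf_{\ncalK}$, because Lemma~\ref{lem:newRDD} already supplies $F_D^{CE}$ and $f_D^{CE}$ on each of them. I would also check that $1-F_D^{CE}(r)>0$ on the support of $f_R^{CE}$, which holds since $r<r_d+v_0$. Finally, I would note that on sub-intervals where $f_D^{\ncalA}-f_D^{\ncalA_4}+f_D^{\ncalA_3}=0$, namely inside the excluded annulus, $f_U^{CE}$ vanishes consistently.
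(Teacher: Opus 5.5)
Your proposal is correct and follows essentially the same route as the paper. The corollary is obtained as in the proof of Lemma~\ref{lem:CRDD}: apply Bayes' formula to a single interfering AP, conditioned on all other APs lying outside $\textbf{b}(\xx,r)$, with $f_D^{CE}$ and $F_D^{CE}$ in place of $f_{D_i}$ and $F_{D_i}$. You add detail the paper omits: the conditional i.i.d. structure of the remaining $N^t-1$ distances given $R=r$, positivity of $1-F_D^{CE}(r)$ for $r<r_d+v_0$, and reading the numerator's argument $d$ as $u$.
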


\section{Coverage Probability Analysis}
\indent In our considered system, we assume that APs follow a BPP $\Phi_{AP}$ and BSs follow an inhomogeneous PPP $\Phi_{BS}$ with a 2D-Gaussian density. There are two necessary conditions to achieve device coverage: energy supply and SINR requirement. In this section, we first define the energy coverage probability (ECP) and transmission coverage probability (TCP). After that, we derive the expression for the overall coverage probability (OCP) $P_{\rm cov}$ using the conditional energy and transmission coverage probabilities.

\begin{table*}[t]\caption{Table of Notations}
\centering
\begin{center}
\resizebox{\textwidth}{!}{
\renewcommand{\arraystretch}{1}%1.4
    \begin{tabular}{ {c} | {l} }
    \hline
        \hline
    \textbf{Notation} & \textbf{Description} \\ \hline
    $\Phi_{AP}$; $\yi$; $N^t$ & BPP modeling the locations of APs; location of any AP; the number of all APs\\ \hline
    $\Phi_{BS}$; $\wj$ & Inhomogeneous PPP modeling the locations of BSs; location of any BS\\ \hline
    $\lambda_p$; $\sigma_p$ & Parameters in the Gaussian density of the inhomogeneous PPP $\Phi_{BS}$\\ \hline    
    $\xx$; $\yy$ & Location of the typical IoT device; location of the closest AP to the typical IoT device\\ \hline
    $\od$; $\oc$& Location of the center of the finite area; location of the city center\\ \hline    
    $r_d$; $r_c$ & Radius of the finite area; distance between $\od$ and $\oc$\\ \hline    
    $\gi$; $\gj$ & Rayleigh fading gains in the charging sub-slot\\ \hline
     $\hi$; $\hj$ & Rayleigh fading gains in the transmission sub-slot\\ \hline
    $\p_{AP}$; $\p_{BS}$ & Uniform power of APs; uniform power of BSs\\ \hline    
    $\tau$; $T$; $\eta$ & Time slot division parameter for charging sub-slot; length of each time slot; efficiency of the RF-to-DC conversion\\ \hline    
    $\sigma^2$; $\alpha$; $\beta$ & Power of thermal noise; path loss exponent ($\alpha>2$); {\rm SINR} threshold for a successful transmission\\ \hline 
    % $P_{\rm cov}$; $W$ & The overall coverage probability; the bandwidth of the downlink channel\\ \hline 
    % $R_{\avg}$; $D_{\avg}$ & The average downlink data rate during transmission; the average downlink data rate during the whole time-slot\\ \hline 
    $R_1$; $R_2$ & Parameters used to limit the distribution of IoT devices in rural areas\\ \hline 
    $R_3$; $R_4$ & Parameters used to limit the distribution of APs in rural areas\\ \hline 

     \hline
    \end{tabular}
    }
\end{center}
\label{tab:TableOfNotations}
%\vspace{-8mm}
\end{table*}

\subsection{Energy Coverage Probability}
\indent To activate its circuit, the typical IoT device needs to harvest enough energy during the charging sub-slot. In order to analyze the energy harvesting performance of IoT devices, we introduce the energy coverage probability in Theorem \ref{theo:ECP}.
\begin{theorem}[Energy Coverage Probability, ECP] Conditioned on the location of the typical IoT device and $R=r$, the probability of the event that $E_H\geq E_{\min}$ is: %the energy harvested during the charging sub-slot $E_H$ is greater than the minimum demand $E_{\min}$ is:
\begin{equation}
\begin{array}{r@{}l}
    \P&(E_H\geq E_{\min}|r,v_0,\psi) \\&=\displaystyle\exp\Big(-r^{\alpha}[\ncalC(\tau)-\Omega_{AP}(r)-\frac{p_{BS}}{p_{AP}}\Omega_{BS}]^+\Big),
\end{array}
\end{equation}
where 
$$\Omega_{AP}(r)=\displaystyle(N^t-1)\int_{r}^{r_d+v_0}u^{-\alpha}f_U(u)\dd u,$$
and
$$\Omega_{BS}=\displaystyle\int_0^{2\pi}\int_0^{\infty}\widetilde{\lambda}_p(\zeta)\xi(\zeta,\gamma)^{-\alpha}\zeta \dd \zeta \dd\gamma.$$ 
\indent In addition, $\ncalC(\tau)=\frac{E_{\min}}{\tau T \eta p_{AP}}$ and $f_{U}(u)$ is shown in (\ref{fUu}). The expression of energy coverage probability is the expectation of conditional ECP over $r$. 
\begin{equation}
\begin{array}{@{}l}
    \P(E_H\geq E_{\min}|v_0,\psi)=\displaystyle\E_{R}[\P(E_H\geq E_{\min}|R,v_0,\psi)]\\
    \ =\displaystyle\E_R\Big[\exp(-R^{\alpha}[\ncalC(\tau)-\Omega_{AP}(R)-\frac{p_{BS}}{p_{AP}}\Omega_{BS}]^+)\Big]\\
    \ =\displaystyle\int_0^{\ncalS}f_{R}(r)\dd r \\\ + \displaystyle\int_{\ncalS}^{r_d+v_0}\exp\big(-r^{\alpha}[\ncalC(\tau)-\Omega_{AP}(r)-\displaystyle\frac{p_{BS}}{p_{AP}}\Omega_{BS}]\big)f_{R}(r)\dd r.
\end{array}
\end{equation}
where $f_{R}(r)$ is shown in (\ref{fRr}), $\ncalS$ is the threshold satisfying $\displaystyle\ncalC(\tau)-\Omega_{AP}(\ncalS)-\frac{p_{BS}}{p_{AP}}\Omega_{BS}=0$.
\label{theo:ECP}
\end{theorem}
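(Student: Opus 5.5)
The plan is to condition on the device position $(v_0,\psi)$ and on $R=r$, isolate the contribution of the closest AP (the only term whose fading is kept random), and replace the remaining aggregate harvested power by its conditional mean. Dividing (\ref{EH2}) by $\tau T\eta p_{AP}$, the event $E_H\geq E_{\min}$ becomes
\begin{equation*}
G_{\yy}r^{-\alpha}+\sum_{\yi\in\Phi_{AP}\backslash\yy}\gi D_i^{-\alpha}+\frac{p_{BS}}{p_{AP}}\sum_{\wj\in\Phi_{BS}}\gj\xi_j^{-\alpha}\;\geq\;\ncalC(\tau).
\end{equation*}
The stated closed form $\exp(-r^\alpha[\cdot]^+)$ cannot be obtained exactly from this event. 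It can be obtained, however, if the two residual sums are replaced by their conditional expectations, so that only $G_{\yy}\sim\exp(1)$ remains random. I will therefore adopt this mean-value (dominant-term) approximation and state it explicitly.

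\emph{Step 1 (AP residual).} Given $R=r$, the other $N^t-1$ APs lie outside $\textbf{b}(\xx,r)$. By Lemma \ref{lem:CRDD}, their distances are i.i.d. with density $f_U(u)$ on $(r,r_d+v_0)$. Since the fading gains have unit mean and are independent of the distances, the conditional mean of the residual sum is
\begin{equation*}
(N^t-1)\int_r^{r_d+v_0}u^{-\alpha}f_U(u)\dd u=\Omega_{AP}(r).
\end{equation*}
\emph{Step 2 (BS term).} Campbell's theorem for the inhomogeneous PPP with density $\widetilde{\lambda}_p(\zeta)$, written in polar coordinates about $\oc$ with $\xi(\zeta,\gamma)$ as in Fig.~\ref{fig:rural}, gives mean $\Omega_{BS}$. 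This term is independent of $r$ because $\Phi_{BS}$ is independent of $\Phi_{AP}$.

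\emph{Step 3 (conditional ECP).} With both residual sums replaced by their means, the event becomes
\begin{equation*}
G_{\yy}\geq r^\alpha\Big[\ncalC(\tau)-\Omega_{AP}(r)-\frac{p_{BS}}{p_{AP}}\Omega_{BS}\Big].
\end{equation*}
When the bracket is nonpositive the event has probability one. Otherwise, the exponential tail of $G_{\yy}$ gives $\exp(-r^\alpha[\cdot])$. The positive part $[\cdot]^+$ combines both cases into the first displayed formula.

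\emph{Step 4 (deconditioning).} Average over $R$ using $f_R(r)$ from Lemma \ref{lem:CDD}. To split the integral at $\ncalS$, I need the bracket to change sign exactly once. I will argue that $\Omega_{AP}(r)$ is nonincreasing in $r$, since conditioning on a larger empty disc pushes the other APs farther away in the sense of stochastic order. I will check this directly from (\ref{fUu}): for $u>r'>r$, the ratio $f_U(u\mid r')/f_U(u\mid r)$ equals $(1-F_{D_i}(r))/(1-F_{D_i}(r'))$, which is at least $1$ and constant in $u$. So $U\mid r'$ is $U\mid r$ conditioned on $U>r'$, which is stochastically larger, and $u^{-\alpha}$ is decreasing. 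Hence the bracket is nondecreasing in $r$. For $r<\ncalS$ the integrand reduces to $f_R(r)$, and for $r>\ncalS$ the exponential form applies, which yields the final expression.

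The main obstacle is Step 3: the mean-value replacement is an approximation rather than an identity, and it should be stated as such. An exact treatment would require the Laplace functionals of both residual sums followed by a Gil-Pelaez-type inversion, so I would present the result as the dominant-AP approximation.
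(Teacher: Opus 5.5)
Your proposal is correct and follows the paper's proof essentially step for step. The paper also replaces the residual AP and BS sums by their conditional means in its step (a), which it likewise marks as an approximation. It obtains $\Omega_{AP}(r)$ from unit-mean fading and the i.i.d.\ conditional density $f_U$, and $\Omega_{BS}$ from the inhomogeneous PPP. It then uses the exponential tail of $G_{\yy}$ with $[\cdot]^+$ and splits the $R$-integral at $\ncalS$.

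Your stochastic-ordering argument for the monotonicity of $\Omega_{AP}(r)$ usefully justifies what the paper only asserts. The paper's further claim that $\Omega_{AP}$ ranges over $(0,+\infty)$ is inaccurate at the upper end. As $r\to r_d+v_0$, the limit is $(N^t-1)(r_d+v_0)^{-\alpha}$, so a root need not exist. In that case you should take $\ncalS=r_d+v_0$.
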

\begin{IEEEproof}
See Appendix~\ref{app:ECP}.
\end{IEEEproof}
{When the remoteness of the finite area increases, the BS density decreases and such a finite area can be used to model a rural area. For most rural areas far from the city center, it is an acceptable assumption to ignore the effect of BSs. In this case, the energy coverage probability can be simplified as shown in Corollary \ref{cor:ECP1}.}
\begin{cor}
If the RF signals from BSs can be ignored, the conditional ECP can be written as:
\begin{equation}
    \P(E_H\geq E_{\min}|r,v_0)=\exp(-r^{\alpha}[\ncalC(\tau)-\Omega_{AP}(r)]^+),
\end{equation}
and the ECP can be written as:
\begin{equation}
\begin{array}{@{}r@{}l}
    \P(E_H\geq E_{\min}|v_0)&=\E_{R}[\P(E_H\geq E_{\min}|R,v_0)].
\end{array}
\end{equation}

\label{cor:ECP1}
\end{cor}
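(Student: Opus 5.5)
Corollary \ref{cor:ECP1} follows by specializing Theorem \ref{theo:ECP} to the case with no BSs, and I would still sketch the underlying argument so that the corollary is self-contained.

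\textbf{Step 1: reduce to a tail event for the closest AP.} Conditioned on $v_0$ and $R=r$, use the simplified energy (\ref{EH1}). The event $E_H\geq E_{\min}$ is equivalent to
\[
\sum_{\yi\in\Phi_{AP}}\gi D_i^{-\alpha}\geq \ncalC(\tau).
\]
I would split the sum into the closest-AP term $\g1 r^{-\alpha}$ and the remaining $N^t-1$ terms. This gives
\[
\P\big(\g1\geq r^{\alpha}[\ncalC(\tau)-\textstyle\sum_{i\neq 1}\gi D_i^{-\alpha}]\big).
\]
Since $\g1\sim\exp(1)$ and is independent of everything else, this equals
\[
\E\Big[\exp\big(-r^{\alpha}[\ncalC(\tau)-\textstyle\sum_{i\neq1}\gi D_i^{-\alpha}]^+\big)\Big].
\]
The positive part is needed because if the bracket is negative the event holds with probability one.

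\textbf{Step 2: replace the interferer sum by its mean.} Follow the same step used in Appendix~\ref{app:ECP} for Theorem \ref{theo:ECP}: approximate the remaining sum by its conditional mean.

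\begin{itemize}
\item Given $R=r$ and $\xx$, the other $N^t-1$ AP distances are i.i.d.\ with density $f_U(u)$ from Lemma \ref{lem:CRDD}.
\item The fading gains have $\E[\gi]=1$.
\item Hence the conditional mean of the sum is $(N^t-1)\int_r^{r_d+v_0}u^{-\alpha}f_U(u)\,\dd u=\Omega_{AP}(r)$.
\end{itemize}

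Substituting this mean gives $\exp(-r^{\alpha}[\ncalC(\tau)-\Omega_{AP}(r)]^+)$. This is exactly the expression of Theorem \ref{theo:ECP} with the BS term $\frac{p_{BS}}{p_{AP}}\Omega_{BS}$ set to zero. The setting $r_c\gg\sigma_p$ makes $\widetilde{\lambda}_p$ negligible near $\ncalA$, which is why (\ref{EH1}) is used and $\Omega_{BS}\to0$.

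The main obstacle is this mean-substitution step. Exact averaging over the bracketed positive part is not tractable, because the exponential of a truncated linear form does not factor over the i.i.d.\ terms. Without truncation it would factor into products of Laplace transforms, but it would lose the $[\cdot]^+$. I would therefore state this step as the same approximation adopted for Theorem \ref{theo:ECP}, rather than attempt an exact derivation.

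\textbf{Step 3: decondition over $R$.} With $f_R$ from Lemma \ref{lem:CDD}, the unconditional ECP is $\E_R[\cdot]$. As in Theorem \ref{theo:ECP}, this splits at the threshold $\ncalS$ solving $\ncalC(\tau)=\Omega_{AP}(\ncalS)$. Here $\Omega_{AP}(r)r^{0}$ decreases in $r$, so the bracket is negative for $r<\ncalS$, and the integrand equals $1$ on that range.
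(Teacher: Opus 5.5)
Your proposal follows the paper's own argument in the last part of Appendix~\ref{app:ECP}. You split off the closest-AP term, replace $\sum_{i\neq 1} G_{\mathbf{y}_i} D_i^{-\alpha}$ by its conditional mean $\Omega_{AP}(r)$ (the paper's step (e), mirroring step (a)), use the exponential CCDF of $G_{\mathbf{y}_1}$ to obtain the $[\cdot]^+$ form, and decondition over $R$ with $f_R$. You are also right to present the mean-substitution as an approximation rather than an exact identity; the paper marks that step with ``$\approx$'' in the same way, even though the corollary itself is stated as an equality.
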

% \begin{IEEEproof}
% See the Appendix~\ref{app:ECP}.
% \end{IEEEproof}

\begin{remark}
    When the location of the typical IoT device is fixed, the energy harvesting performance can be analyzed using $\P(E_H\geq E_{\min}|v_0)$. But when it is uniformly and randomly distributed in the rural area, the energy harvesting performance should be evaluated using the expectation $\E_V[\P(E_H\geq E_{\min}|V)]$. 
    \label{rem:ECP}
\end{remark}
\subsection{Transmission Coverage Probability}

\indent Conditioned on the distance between the typical IoT device and its closest AP $R=r$, the transmission coverage probability can be expressed using the Laplace transform of the interference of APs and BSs as shown in Lemma \ref{lem:TCP}.
\begin{lemma}[Transmission Coverage Probability, TCP]
Conditioned on $R=r$, the probability of the event ${\rm SINR}\geq\beta$ is:
\begin{equation}
\begin{array}{r@{}l}
    \displaystyle\P&({\rm SINR}\geq \beta|r,v_0,\psi) \\&=\exp(-\frac{\beta r^{\alpha} \sigma^2}{p_{AP}})\ncalL_{AP}(\beta r^{\alpha})\ncalL_{BS}\bigg(\frac{p_{BS}}{p_{AP}}\beta r^{\alpha}\bigg),
\end{array}
\end{equation}
where 
\begin{equation}
\begin{array}{@{}l}
\ncalL_{BS}(s)=\exp(-\displaystyle\int_0^{2\pi}\int_0^{\infty} \widetilde{\lambda}(\zeta)(1-\frac{1}{1+s\xi(\zeta,\gamma)^{-\alpha}})\zeta \dd\zeta \dd\gamma)
\end{array}
\end{equation}
and 
\begin{equation}
\ncalL_{AP}(s)=\bigg(\int_0^{r_d+v_0}\frac{1}{1+s u^{-\alpha}}f_U(u)\dd u\bigg)^{N^t-1},
\end{equation}
where $f_U(u)$ is given in (\ref{fUu}) and $f_{R}(r)$ is given in (\ref{fRr}). 
% The transmission coverage probability is its expectation about the closest relevant distance $r$.
% \begin{equation}
%     \begin{array}{@{}r@{}l}
%     \P({\rm SINR}\geq \beta|v_0)&=\E_{R}[\P({\rm SINR}\geq \beta|R,v_0)]\\
%     &=\int_0^{r_d+v_0}\exp(-\frac{\beta r^{\alpha} \sigma^2}{p_{AP}})\ncalL_{AP}(\beta r^{\alpha})f_{R}(r)\dd r,
%     \end{array}
% \end{equation}

\label{lem:TCP}
\end{lemma}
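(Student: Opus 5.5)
We follow the standard Rayleigh-fading argument for the SINR coverage. We condition on the device location $(v_0,\psi)$ and on $R=r$; then the serving AP sits at distance $r$ and $H_{\yy}\sim\exp(1)$ is independent of everything else. The event ${\rm SINR}\geq\beta$ is equivalent to $H_{\yy}\geq \beta r^{\alpha}(I_1+\sigma^2)/p_{AP}$. Using the exponential tail of $H_{\yy}$ and averaging over the interference gives
\begin{equation*}
\P({\rm SINR}\geq \beta|r,v_0,\psi)=\exp\Big(-\frac{\beta r^{\alpha}\sigma^2}{p_{AP}}\Big)\,\E\Big[\exp\Big(-\frac{\beta r^{\alpha}}{p_{AP}}I_1\Big)\Big].
\end{equation*}
By (\ref{interference2}), $I_1/p_{AP}$ splits into an AP part and a BS part $\frac{p_{BS}}{p_{AP}}\sum_{\wj}\hj\xi_j^{-\alpha}$. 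These two parts are independent, because $\Phi_{AP}$ and $\Phi_{BS}$ are independent and all fadings are independent. The expectation therefore factors into $\ncalL_{AP}(\beta r^{\alpha})\,\ncalL_{BS}(\frac{p_{BS}}{p_{AP}}\beta r^{\alpha})$, where each Laplace transform is taken of the corresponding unit-power sum.

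For the AP term, we first identify the interferers. Conditioned on $R=r$, the remaining $N^t-1$ APs are i.i.d. with distance to $\xx$ distributed according to $f_U$ from Lemma \ref{lem:CRDD}, i.e. uniformly in $\ncalA$ outside $\textbf{b}(\xx,r)$. For each such AP, averaging $e^{-sH_{\yi}D_i^{-\alpha}}$ over $H_{\yi}\sim\exp(1)$ gives $1/(1+sD_i^{-\alpha})$. Averaging this over $D_i\sim f_U$ and taking the product over the $N^t-1$ i.i.d. terms yields the claimed $(N^t-1)$-th power of $\int \frac{1}{1+su^{-\alpha}}f_U(u)\dd u$. The integral runs over $0$ to $r_d+v_0$ in the statement, but $f_U$ vanishes for $u<r$, so this is consistent.

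For the BS term, we first average the fading $\hj$ inside the product, which gives $\prod_j 1/(1+s\xi_j^{-\alpha})$. We then apply the probability generating functional of the inhomogeneous PPP with intensity $\widetilde{\lambda}_p(\zeta)$. Writing the integral in polar coordinates $(\zeta,\gamma)$ centered at $\oc$, with $\xi(\zeta,\gamma)$ the distance to $\xx=(\zeta_0,\gamma_0)$, gives $\exp\big(-\int_0^{2\pi}\int_0^\infty\widetilde{\lambda}_p(\zeta)(1-\frac{1}{1+s\xi^{-\alpha}})\zeta\dd\zeta\dd\gamma\big)$. No exclusion region is needed, since association is only to APs and BSs anywhere act as interferers.

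The main point requiring care is the justification that, given $R=r$, the other $N^t-1$ APs are i.i.d. with density $f_U$. This holds because of the BPP structure, but the conditioning must be on the event that the closest point is at $r$ and the identity of that point is symmetrized. We invoke Lemma \ref{lem:CRDD} directly for this. We will also note that conditioning on $r$ leaves $\Phi_{BS}$ unaffected, since the two processes are independent.
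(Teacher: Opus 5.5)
Your proposal is correct and takes the same route as the paper's proof in Appendix~\ref{app:TCP}. The paper likewise uses the exponential tail of $H_{\mathbf{y}_1}$ to extract the noise factor and factors the interference Laplace transform by independence into an AP part and a BS part: the AP part comes from averaging each fading to $1/(1+sD_i^{-\alpha})$ and raising the $f_U$-average to the power $N^t-1$, and the BS part comes from the PGFL of the inhomogeneous PPP. Your remarks on the lower integration limit and on the i.i.d.\ structure of the remaining APs given $R=r$ only make explicit what the paper leaves implicit.
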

\begin{IEEEproof}
See Appendix~\ref{app:TCP}.
\end{IEEEproof}
Transmission coverage probability is a vital tool to calculate the overall coverage probability. We also simplify the expression of TCP in Corollary \ref{cor:TCP1} when the interference of BSs can be ignored.
\begin{cor}
If the effect of BSs can be ignored, the interference can be expressed as shown in (\ref{interference1}). In this situation, the conditional TCP can be written as:
\begin{equation}
    \P({\rm SINR}\geq \beta|r,v_0)=\exp(-\frac{\beta r^{\alpha} \sigma^2}{p_{AP}})\ncalL_{AP}(\beta r^{\alpha}).
\end{equation}
%     The expression of transmission coverage probability is the expectation of conditional TCP over $r$.
% \begin{equation}
%     \P({\rm SINR}\geq \beta|v_0)=\E_R\bigg[\exp(-\frac{\beta R^{\alpha} \sigma^2}{p_{AP}})\ncalL_{AP}(\beta R^{\alpha})\ncalL_{BS}(\frac{p_{BS}}{p_{AP}}\beta R^{\alpha})\bigg]  
% \end{equation}
\label{cor:TCP1}
\end{cor}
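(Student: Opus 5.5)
The plan is to specialize Lemma \ref{lem:TCP} by setting the BS contribution to zero and checking that nothing else in the derivation changes. When the BSs are ignored, the interference reduces to (\ref{interference1}). So we start from the definition of SINR in (\ref{SINR1}) with $l=1$ and condition on $R=r$ and on $v_0$. Since $H_{\yy}\sim\exp(1)$,
\begin{equation*}
\P({\rm SINR}\geq\beta|r,v_0)=\P\Big(H_{\yy}\geq \frac{\beta r^{\alpha}}{p_{AP}}(I_1+\sigma^2)\Big)=\E_{I_1}\Big[\exp\Big(-\frac{\beta r^{\alpha}}{p_{AP}}(I_1+\sigma^2)\Big)\Big].
\end{equation*}
This factors into $\exp(-\beta r^{\alpha}\sigma^2/p_{AP})$ times the Laplace transform of $I_1/p_{AP}$ evaluated at $s=\beta r^{\alpha}$.

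Next we compute that Laplace transform using the BPP structure. Conditioned on $\xx$ and on $R=r$, the remaining $N^t-1$ APs are i.i.d. Their distances $D_i$ to $\xx$ follow the conditional law $f_U(u)$ of Lemma \ref{lem:CRDD}, which is supported on $(r,r_d+v_0)$. Their fading gains $H_{\yi}$ are i.i.d.\ $\exp(1)$ and independent of the locations. Hence
\begin{equation*}
\E\Big[e^{-s\sum_{i\geq2}H_{\yi}D_i^{-\alpha}}\Big]=\prod_{i\geq 2}\E_{U}\Big[\E_H\big[e^{-sHU^{-\alpha}}\big]\Big]=\Big(\int \frac{1}{1+su^{-\alpha}}f_U(u)\dd u\Big)^{N^t-1},
\end{equation*}
which is exactly $\ncalL_{AP}(s)$ as written in Lemma \ref{lem:TCP}. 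The integration range $[0,r_d+v_0]$ in the lemma is harmless, because $f_U$ vanishes below $r$.

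Finally, it remains to confirm that the BS term really disappears. In Lemma \ref{lem:TCP}, $\ncalL_{BS}$ is the Laplace transform of the BS interference, and with no BS term in $I_1$ this factor is identically $1$. Equivalently, $r_c\gg\sigma_p$ makes $\widetilde{\lambda}_p(\zeta)$ negligible near the rural area, so the exponent in $\ncalL_{BS}$ tends to $0$. Dropping this factor gives the stated expression. The dependence on $\psi$ also disappears, because only the AP geometry, through $v_0$, enters $f_U$.

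The only delicate point is justifying the conditional i.i.d.\ claim for the interfering distances given $R=r$. This requires that, conditioned on the closest AP being at distance $r$, the other $N^t-1$ points of the BPP are i.i.d.\ uniform on $\ncalA\setminus\textbf{b}(\xx,r)$. That is the content of Lemma \ref{lem:CRDD}, which we invoke directly. Everything else is routine.
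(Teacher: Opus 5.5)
Your proposal is correct and follows essentially the same route as the paper's derivation at the end of Appendix~\ref{app:TCP}: the exponential CCDF of $H_{\yy}$ splits off the noise factor $\exp(-\beta r^{\alpha}\sigma^2/p_{AP})$, and the remaining expectation is evaluated as $\ncalL_{AP}(\beta r^{\alpha})$. That evaluation uses $\exp(1)$ fading and the conditionally i.i.d.\ interferer distances with density $f_U$, with no BS factor present. The conditional independence you flag is the standard order-statistics property of a BPP given its nearest point, which the paper likewise uses implicitly in step (g).
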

% \begin{IEEEproof}
% See the Appendix~\ref{app:TCP}.
% \end{IEEEproof}

\subsection{Overall Coverage Probability}
We have obtained the expressions of ECP and TCP conditioned on $R=r$, but our final goal is to achieve the expression of overall coverage probability. Therefore, we introduce the overall coverage probability in Theorem \ref{theo:OCP}.
\begin{theorem}[Overall Coverage Probability, OCP] The overall coverage probability can be derived using the conditional energy coverage probability and transmission coverage probability. It can be expressed as:
\begin{equation}
    \begin{array}{@{}r@{}l}
    P&_{\rm cov}(\beta|v_0,\psi)\\&=\displaystyle\int_0^{r_d+v_0}\exp(-r^{\alpha}[\ncalC(\tau)-\Omega_{AP}(r)-\frac{p_{BS}}{p_{AP}}\Omega_{BS}]^+)\\ 
    &\times \displaystyle\exp(-\frac{\beta r^{\alpha} \sigma^2}{p_{AP}})\ncalL_{AP}(\beta r^{\alpha})\ncalL_{BS}\bigg(\frac{p_{BS}}{p_{AP}}\beta r^{\alpha}\bigg)f_{R}(r)\dd r,
    \end{array}
\end{equation}
\textsl{where $\ncalC(\tau)=\frac{E_{\min}}{\tau T \eta p_{AP}}$, $[x]^+=\max\{0,x\}$. The $\Omega_{AP}(r)$, $\Omega_{BS}$ and $\ncalL_{AP}(s)$, $\ncalL_{BS}(s)$ are defined in Theorem \ref{theo:ECP} and Lemma \ref{lem:TCP}, while $f_{R}(r)$ and $f_U(u)$ are given in (\ref{fRr}) and (\ref{fUu})}.
\label{theo:OCP}
\end{theorem}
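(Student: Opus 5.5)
The approach is to write the overall coverage probability as $P_{\rm cov}(\beta|v_0,\psi)=\E\left[\nbb1({\rm SINR}\geq\beta)\nbb1(E_H\geq E_{\min})\right]$ and condition on the closest-AP distance $R=r$, with the device location $(v_0,\psi)$ held fixed. Given $R=r$, the two events involve disjoint sets of random variables. The energy event depends on the charging-sub-slot fading gains $\{\gi\},\{\gj\}$. The transmission event depends on the transmission-sub-slot gains $\{\hi\},\{\hj\}$. By the model assumptions, these two families are mutually independent and independent of the point processes. The plan is therefore to argue that, conditioned on $R=r$, the indicator $\nbb1(E_H\geq E_{\min})$ and the indicator $\nbb1({\rm SINR}\geq\beta)$ are treated as conditionally independent. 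This gives
\begin{equation*}
P_{\rm cov}(\beta|v_0,\psi)=\E_R\Big[\P(E_H\geq E_{\min}|R,v_0,\psi)\,\P({\rm SINR}\geq\beta|R,v_0,\psi)\Big].
\end{equation*}

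Next, the conditional ECP from Theorem \ref{theo:ECP} is substituted, namely $\exp(-r^{\alpha}[\ncalC(\tau)-\Omega_{AP}(r)-\frac{p_{BS}}{p_{AP}}\Omega_{BS}]^+)$. The conditional TCP from Lemma \ref{lem:TCP} is substituted as well, namely $\exp(-\beta r^\alpha\sigma^2/p_{AP})\ncalL_{AP}(\beta r^\alpha)\ncalL_{BS}(\frac{p_{BS}}{p_{AP}}\beta r^\alpha)$. The final step is to average over $R$ using the density $f_R(r)$ of Lemma \ref{lem:CDD}, supported on $[0,r_d+v_0]$. This produces exactly the stated single integral. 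The $[\cdot]^+$ is kept inside the integrand rather than splitting at $\ncalS$, since the statement is written in that unsplit form.

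The main obstacle is justifying the factorization. Conditioned only on $R=r$, both the harvested energy and the interference still depend on the same interfering AP distances $\{D_i\}_{i\ge2}$ and on the same BS realization $\Phi_{BS}$. Strictly, then, the two events are positively correlated through the geometry, and only the fading gains are independent.

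I would handle this by stating explicitly the approximation, or modelling choice, under which Theorem \ref{theo:ECP} and Lemma \ref{lem:TCP} are both evaluated given only $r$. In each of those results the interfering distances have already been averaged over, via $f_U(u)$ from Lemma \ref{lem:CRDD} and via the PPP mean measure and PGFL respectively. So the product of the two conditional probabilities is the natural conditional-independence approximation, at the level of $R$, that the paper adopts. An exact treatment would require conditioning on the full point configuration and then taking a joint expectation. That joint expectation does not factor into the closed forms of the statement, so I would present the result as following from treating the two events as conditionally independent given $R=r$ and the device location.
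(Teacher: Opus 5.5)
Your proposal follows the paper's proof in Appendix~\ref{app:OCP} essentially step for step, and the argument is fine. Like the paper, you condition on $R=r$, factor the joint indicator using independence of the charging- and transmission-sub-slot fading (the paper's step (i), which it asserts as an exact equality), substitute Theorem~\ref{theo:ECP} and Lemma~\ref{lem:TCP}, and integrate against $f_R(r)$. Your caveat about residual geometric dependence is apt, but it needs two corrections. First, the dependence is negative rather than positive: closer interferers raise $E_H$ but also raise $I_1$, so by a Harris-type inequality over the i.i.d.\ distances $\{D_i\}_{i\ge 2}$ and the BS PPP, the product of the exact conditional marginals upper-bounds the exact joint probability given $R=r$. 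Second, once step (a) of Theorem~\ref{theo:ECP} replaces the non-serving harvested power by its conditional mean, the energy event given $R=r$ depends only on $G_{\textbf{y}_1}$, so the factorization becomes exact and needs no modelling assumption beyond the one already made in Theorem~\ref{theo:ECP}.
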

\begin{IEEEproof}
See Appendix~\ref{app:OCP}.
\end{IEEEproof}
\indent As the combination of ECP and TCP, the expression of overall coverage probability can be simplified when ignoring RF signals from BSs in Corollary \ref{cor:OCP1}.
\begin{cor}
If the RF signals from BSs can be ignored, the overall coverage probability can be written as:
\begin{equation}
    \begin{array}{@{}r@{}l}
    P_{\rm cov}(\beta|v_0)&=\displaystyle\int_0^{r_d+v_0}\exp(-r^{\alpha}[\ncalC(\tau)-\Omega_{AP}(r)]^+)\\
    &\displaystyle\times\exp(-\frac{\beta r^{\alpha} \sigma^2}{p_{AP}}) \ncalL_{AP}(\beta r^{\alpha})f_{R}(r)\dd r.
    \end{array}
\end{equation}
\label{cor:OCP1}
\end{cor}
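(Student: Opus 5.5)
Corollary \ref{cor:OCP1} follows from Theorem \ref{theo:OCP} by specialising to the case where the BS process contributes neither energy nor interference. A self-contained route is also available, and it mirrors Appendix~\ref{app:OCP}.

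We write the OCP as $P_{\rm cov}(\beta|v_0)=\E\left[\nbb1({\rm SINR}\geq\beta)\nbb1(E_H\geq E_{\min})\right]$, with $E_H$ given by (\ref{EH1}) and the interference by (\ref{interference1}).

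We then condition on the device location, via $v_0$, and on the closest-AP distance $R=r$. Given these, the remaining $N^t-1$ APs are i.i.d.\ with distance density $f_U$ from Lemma \ref{lem:CRDD}. The charging-slot fading gains $\{G_{\yi}\}$ and the transmission-slot gains $\{H_{\yi}\}$ are independent of each other. The two events are therefore conditionally independent given the AP distances.

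This point is the main obstacle. Both events depend on the same realisation of $\{D_i\}$, so strict conditional independence holds only given all the distances, not merely given $r$. The paper's conditional ECP in Theorem \ref{theo:ECP} already replaces the random sum $\sum_i D_i^{-\alpha}$ of the interfering APs by its conditional mean $\Omega_{AP}(r)$. Under that treatment the ECP depends on the APs only through $r$. The joint probability then factorises as the product of the conditional ECP and the conditional TCP given $r$. I would state this explicitly, as the Appendix does, and not attempt to avoid the approximation.

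With the factorisation in hand, the remaining steps are routine:
\begin{itemize}
\item Substitute the conditional ECP $\exp(-r^{\alpha}[\ncalC(\tau)-\Omega_{AP}(r)]^+)$ from Corollary \ref{cor:ECP1}. Equivalently, set $p_{BS}\Omega_{BS}=0$ in Theorem \ref{theo:ECP}.
\item Substitute the conditional TCP $\exp(-\beta r^\alpha\sigma^2/p_{AP})\ncalL_{AP}(\beta r^\alpha)$ from Corollary \ref{cor:TCP1}. Equivalently, note that with no BSs the PGFL gives $\ncalL_{BS}\equiv 1$ in Lemma \ref{lem:TCP}.
\item Dependence on $\psi$ disappears, because with only APs the geometry is rotationally symmetric about $\od$.
\item Finally, average over $R$ using $f_R(r)$ from Lemma \ref{lem:CDD}, which is supported on $[0,r_d+v_0]$. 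This yields exactly the stated integral.
\end{itemize}
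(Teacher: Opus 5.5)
Your proposal is correct and follows the paper's own route in Appendix~\ref{app:OCP}. You factor the expectation of the two indicators, conditioned on $R=r$, into the conditional ECP of Corollary~\ref{cor:ECP1} times the conditional TCP of Corollary~\ref{cor:TCP1}, and integrate against $f_R(r)$ over $[0,r_d+v_0]$. Your remark that this factorization given only $r$ actually rests on the mean-replacement approximation of step (e) is accurate: that approximation leaves the energy event depending only on the closest AP's charging-slot gain and $r$. The paper instead attributes step (i) merely to the independence of fading across the two sub-slots, so your justification is sharper than the paper's.
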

% \begin{IEEEproof}
% See the Appendix~\ref{app:OCP}.
% \end{IEEEproof}

\begin{remark}
    When the location of the typical IoT device is fixed, the overall performance can be analyzed using $P_{\rm cov}(\beta|v_0)$. But when the IoT devices are uniformly and randomly distributed inside the finite area, the overall coverage performance should be evaluated using $\E_V[P_{\rm cov}(\beta|V)]$. 
    \label{rem:OCP}
\end{remark}

% \begin{remark}
%     We notice that the number of APs $N^t$ has two conflicting effects on the overall coverage probability. When $N^t$ increases, the IoT device's energy demand is supplied more easily in the charging sub-slot, but the device is also affected by more interfering APs in the transmission sub-slot. So when $N^t$ is small, the OCP is mainly affected by the ECP because the interference is also small. But when $N^t$ is large, the highest value of OCP is mainly affected by TCP due to the sufficient energy supply.
% \end{remark}

% \indent Based on the overall coverage probability, we can work out the average data rate mentioned in (\ref{Ravg}).

% \begin{lemma}[Average Downlink Data Rate]The average data rate during the downlink transmission sub-slot for the IoT device is:
% \begin{equation}
% \begin{array}{@{}r@{}l}
%     R_{\avg}=W\log&(1+\beta)P_{\rm cov}(\beta|v_0)+\int_{W\log(1+\beta)}^{\infty}P_{\rm cov}(2^{\frac{z}{W}}-1|v_0)\dd z
% \end{array}
% \end{equation}
% \label{lem:ADDR}
% \end{lemma}
% \begin{IEEEproof}
% See the Appendix \ref{app:Ravg}. 
% \end{IEEEproof}

\section{Simulation Results and Discussion}
We first consider both the signals from APs and BSs, {and prove that the remoteness of the finite area can affect the energy harvesting performance of IoT devices inside the finite wireless network.} We set the radius of the finite area as $r_c=2\,{\rm km}$ and there are $2\times 10^4 $ APs in it. The parameters of the Gaussian BS density are set as $\lambda_p=2\,{\rm BSs/km^2}$ and $\sigma_p=2\,{\rm km}$. Other parameters in the simulation are set as follows: $\alpha=3$, $\tau=0.2$, $T=1\,{\rm s}$, $p_{AP}=1\,{\rm W}$, $p_{BS}=10\,{\rm W}(1\,{\rm dBW})$, $\eta=0.5$, and $E_{\min}=10^{-4}\,{\rm J}$. {The parameters we choose have been verified in the literature \cite{8536464,liu2019energy,9420290,ye2021joint,8239664}, respectively. As shown in Fig.\ref{fig:remoteness}, if $r_c$ is much less than $\sigma_p$, the IoT devices inside the finite area have a uniform and high energy supply. When $r_c$ is close to $\sigma_p$, the IoT devices closer to the city center can have better energy coverage than those farther from the city center. As the finite area moves away from the city center, the energy coverage probability decreases, When $r_c$ is much larger than $\sigma_p$, the IoT devices also have a uniform energy supply but lower than in the previous case. Such a finite area can be used to model a communication network in rural area.} 
\begin{figure*}[htbp]
    \centering
    \includegraphics[width=1.6\columnwidth]{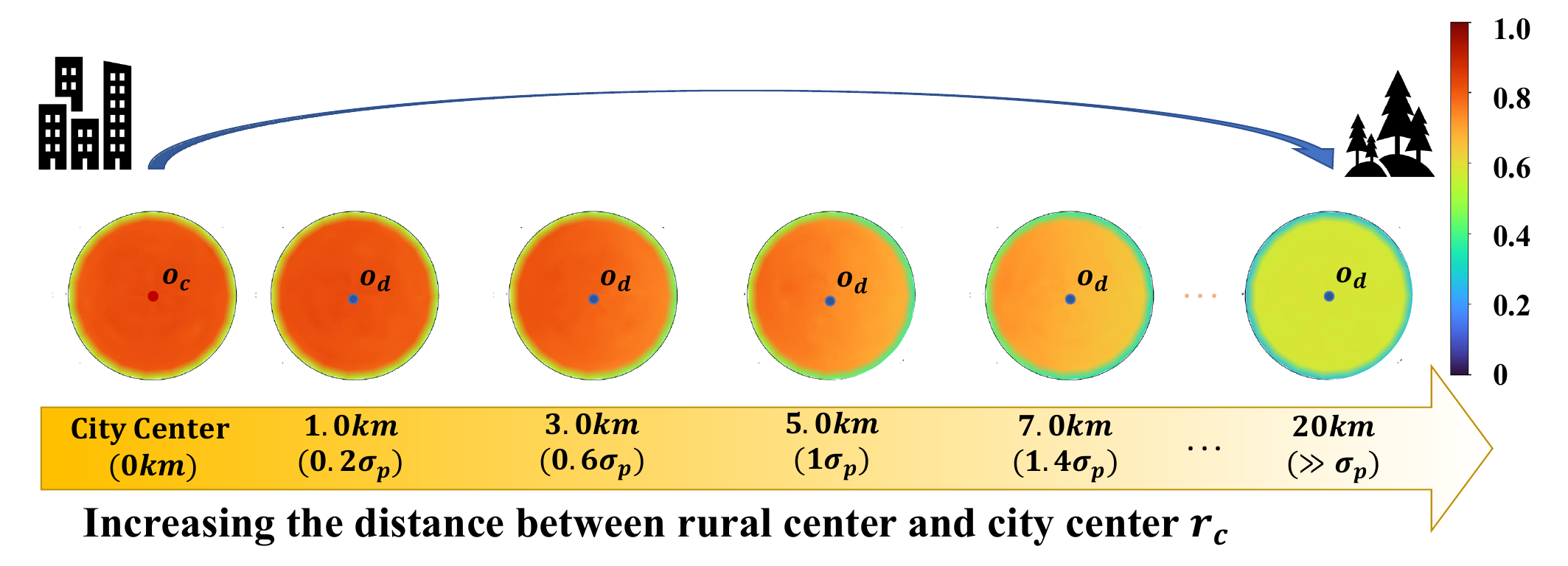}
    \caption{The figure shows the effect of remoteness of finite areas on energy harvesting performance. When $r_c=0\,{\rm km}$, the energy harvesting performance is the highest. When $r_c$ is close to $\sigma_p$, the IoT devices closer to the city center have better energy harvesting performance than the devices further away from the city center. When $r_c \gg\sigma_p$, the energy harvesting performance is the lowest.}
    \label{fig:remoteness}
\end{figure*}

\indent {Firstly, we analyze the performance of a typical IoT device inside a rural area. We consider a small rural area and reset the radius of rural area $r_d=100\,{\rm m}$, while other parameters about transmission are set as: $\beta=0.1(-10\,{\rm dB})$ and $\sigma^2=10^{-9}$. We simulate the energy coverage probability and overall coverage probability to observe their relation to $v_0$ and $N^t$.} Fig.\ref{fig:ECP_v0} and Fig.\ref{fig:OCP_v0} show the theoretical analysis and simulation results of ECP and OCP when we choose different values of $N^t$. {When the typical IoT device is located inside the finite space ($v_0<80\,{\rm m}$), ECP and OCP are both generally stable. This shows that APs beyond a certain distance only have a small impact on IoT devices. When the AP density in the area near IoT devices is uniform, their performance becomes more stable. But when it is close to the edge of the finite space, the ECP is lower. Both ECP and OCP generally decrease when the distance between the IoT device and the rural center increases, which is because the IoT device near the edge can not harvest enough energy to support its operation.} %And more details are shown in Fig.\ref{fig:ECPsim_Nt_v0} and Fig.\ref{fig:OCPsim_Nt_v0}.\\
\begin{figure}[htbp]
\subfigure[Energy coverage probability derived in Theorem \ref{theo:ECP}.]{
\begin{minipage}[t]{1\linewidth}
\centering 
\includegraphics[width=1\textwidth]{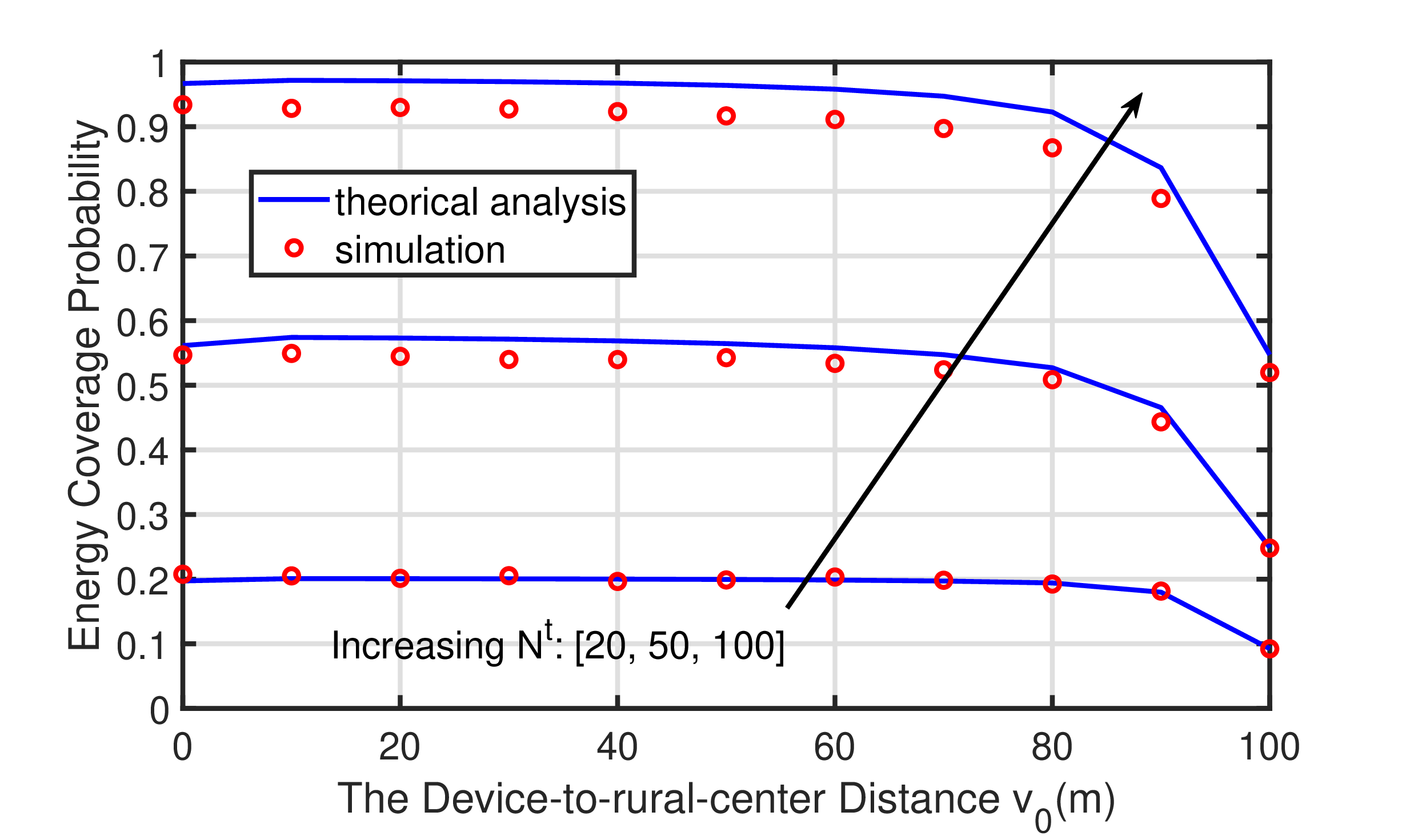}
\label{fig:ECP_v0}
\end{minipage}}
\subfigure[Overall coverage probability derived in Theorem \ref{theo:OCP}.]{
\begin{minipage}[t]{1\linewidth}
\centering 
\includegraphics[width=1\textwidth]{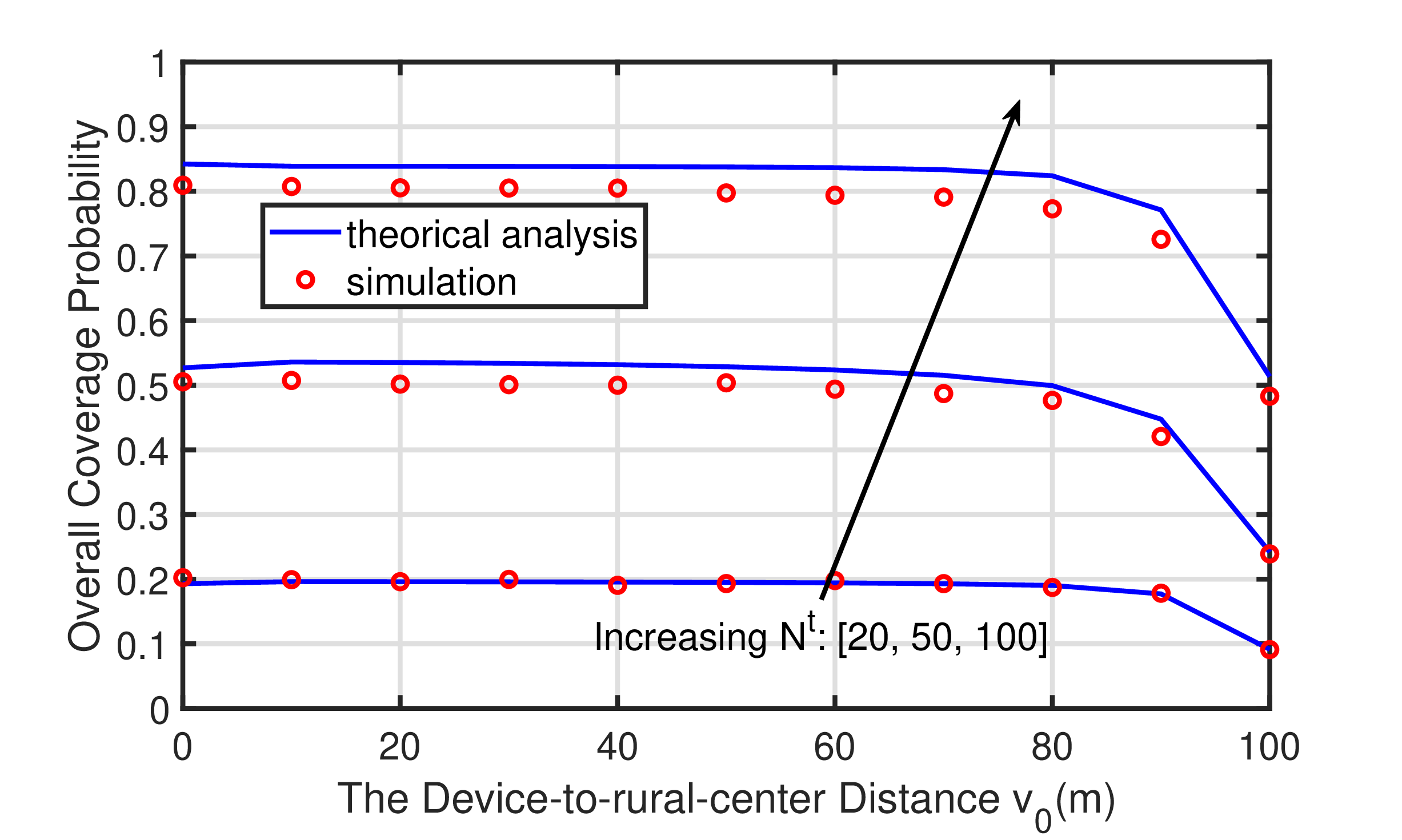}
\label{fig:OCP_v0}
\end{minipage}}
\caption{Theoretical analysis and simulation results of ECP and OCP conditioned on $v_0$.}
\end{figure}

% \begin{figure}[ht]    
%         \centering
% \includegraphics[width=0.9\columnwidth]{figures/ECP_v0.png}
% \caption{The theoretical analysis and simulation results of energy coverage probability derived in Theorem \ref{theo:ECP}. } 
% \label{fig:ECP_v0}
% \end{figure}

% \begin{figure}[ht]    
%         \centering
% \includegraphics[width=0.9\columnwidth]{figures/OCP_v0.png}
% \caption{The theoretical analysis and simulation results of overall coverage probability derived in Theorem \ref{theo:OCP}. }
% \label{fig:OCP_v0}
% \end{figure}

% \begin{figure}[ht]    
%         \centering
% \includegraphics[width=1\columnwidth]{figures/ECP_Nt.png}
% \caption{Theoretical analysis of energy coverage probability derived in Theorem \ref{theo:ECP} compared to the simulation results.}
% \label{fig:ECP_Nt}
% \end{figure}
% \begin{figure}[ht]    
%         \centering
% \includegraphics[width=1\columnwidth]{figures/OCP_Nt.png}
% \caption{Theoretical analysis of overall coverage probability derived in Theorem \ref{theo:OCP} compared to the simulation results.}
% \label{fig:OCP_Nt}
% \end{figure}
{Secondly, we evaluate the average performance of IoT devices inside a rural area.} In Fig.\ref{fig:ECP_Nt} and Fig.\ref{fig:OCP_Nt}, we plot the theoretical analysis curves of the expectation of ECP and OCP over $v_0$, which are mentioned in Remark \ref{rem:ECP} and Remark \ref{rem:OCP}. We compare them to the simulation results. When we set up enough APs to provide energy to IoT devices, the ECP approaches 1. When $N^t$ is small, the OCP has a similar trend to the ECP, but it does not approach 1 when $N^t$ is large due to high interference. Since the number of APs $N^t$ has different effects on ECP and TCP, when we use OCP as the evaluation standard, we can choose the optimal $N^t$ to achieve the trade-off. For example, when $r_d=20\,{\rm m}$, the optimal number of APs is 10. But when $r_d=50\,{\rm m}$, the optimal $N^t$ is 60.
\begin{figure}[htbp]
\subfigure[The expectation of energy coverage probability over $v_0$ mentioned in Remark \ref{rem:ECP}.]{
\begin{minipage}[t]{1\linewidth}
\centering 
\includegraphics[width=1\textwidth]{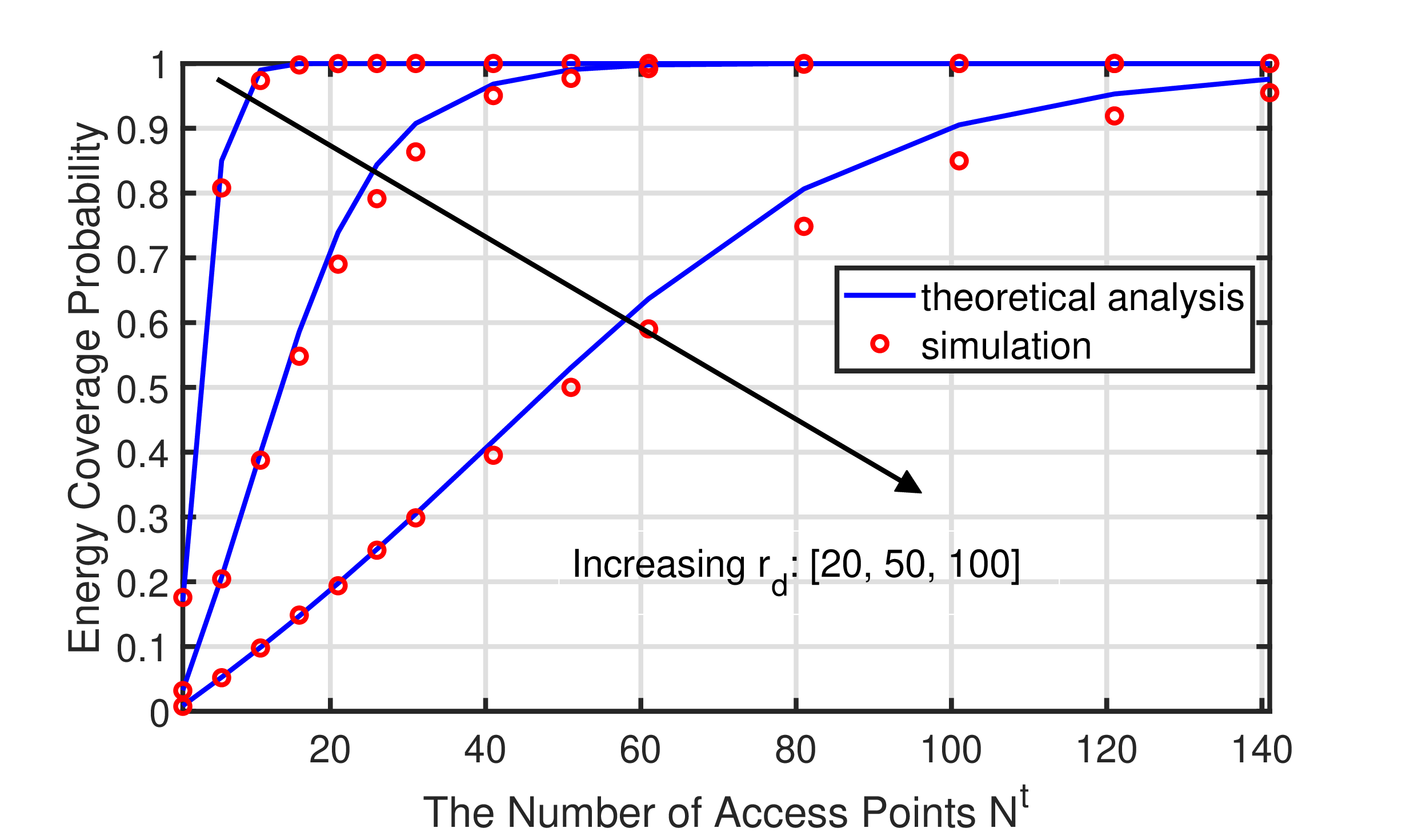}
\label{fig:ECP_Nt}
\end{minipage}}
\subfigure[The expectation of overall coverage probability over $v_0$ mentioned in Remark \ref{rem:OCP}.]{
\begin{minipage}[t]{1\linewidth}
\centering 
\includegraphics[width=1\textwidth]{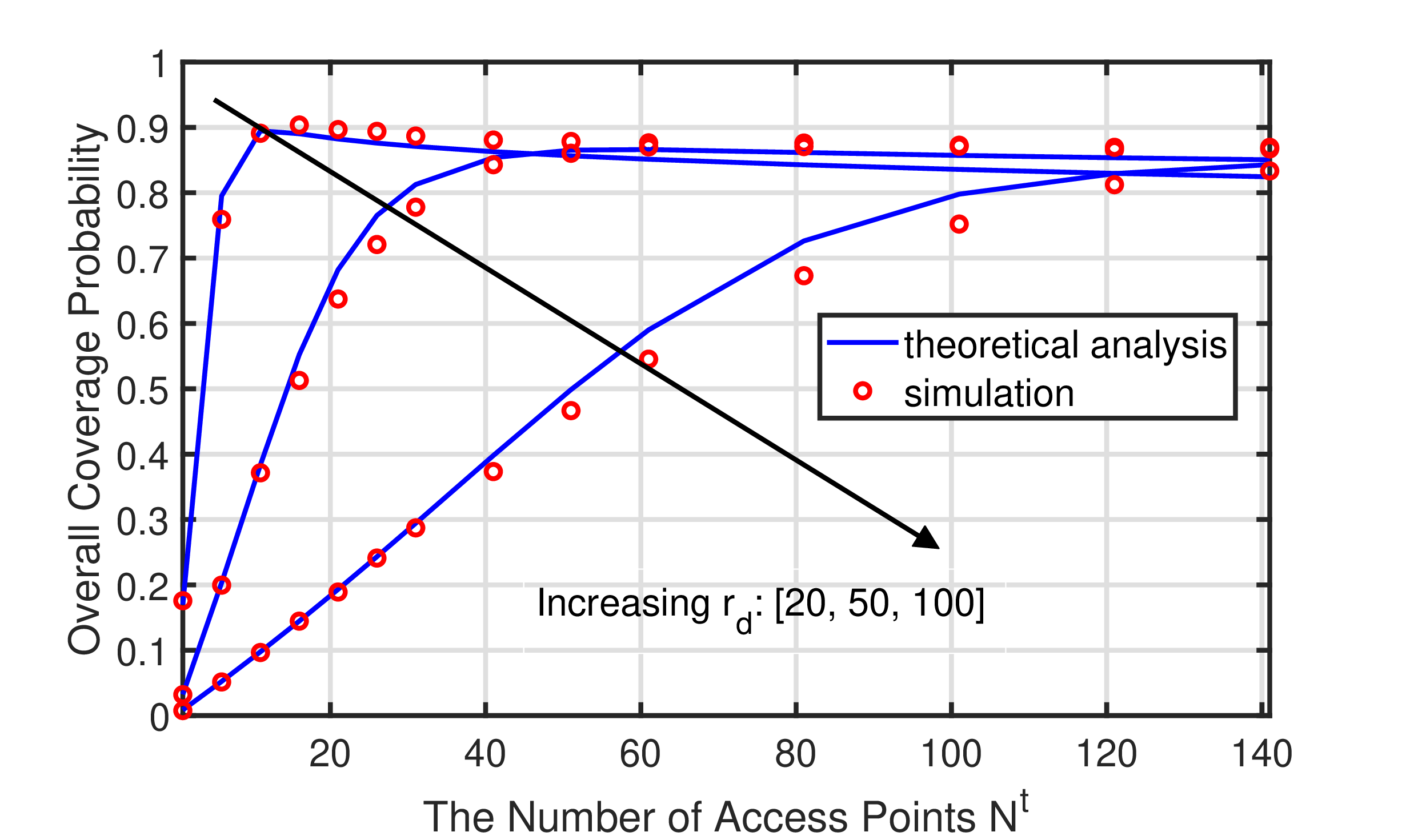}
\label{fig:OCP_Nt}
\end{minipage}}
\caption{Theoretical analysis of ECP and OCP compared to the simulation results conditioned on $N^t$.}
\end{figure}
% \begin{figure}[ht]    
%         \centering
% \includegraphics[width=0.9\columnwidth]{figures/ECPsim_Nt_v0.png}
% \caption{The energy coverage probability heatmap (simulation)}
% \label{fig:ECPsim_Nt_v0}
% \end{figure}
% As shown in Fig.\ref{fig:ECPsim_Nt_v0}, when increasing the number of the APs $N^t$, the ECP is higher, which means the energy demand in the IoT device is supplied more easily. Similarly Fig.\ref{fig:OCPsim_Nt_v0} is a heatmap which shows the simulation result of the OCP which has a similar trend to Fig.\ref{fig:ECPsim_Nt_v0}. But the OCP cannot reach the same highest value as ECP, because it's highest value is constrained by the transmission performance. 

% \begin{figure}[ht]    
%         \centering
% \includegraphics[width=0.9\columnwidth]{figures/OCPsim_Nt_v0.png}
% \caption{The overall coverage probability heatmap (simulation)}
% \label{fig:OCPsim_Nt_v0}
% \end{figure}

%  \begin{figure*}[htbp]
% \subfigure[The heatmap of overall coverage probability (simulation) when $(R_1,R_2)=(30,60)$ and $N^t=10$.]{
% \begin{minipage}[t]{0.5\linewidth}
% \centering 
% \includegraphics[width=1\textwidth]{figures/OCPsim_30_60.png}
% \label{fig:OCPsim_3060}
% \end{minipage}}
% \subfigure[The theoretical analysis and simulation result of the optimal edge of APs' distribution when $(R_1,R_2)=(30,60)$ and $N^t=10$.]{
% \begin{minipage}[t]{0.5\linewidth}
% \centering 
% \includegraphics[width=1\textwidth]{figures/Optimal.pdf}
% \label{fig:Optimal_3060}
% \end{minipage}}
% \caption{Theoretical analysis and simulation result when $(R_1,R_2)=(30,60)$ and $N^t=10$.}
% \end{figure*}
\begin{figure}[ht]    
        \centering
\includegraphics[width=1\columnwidth]{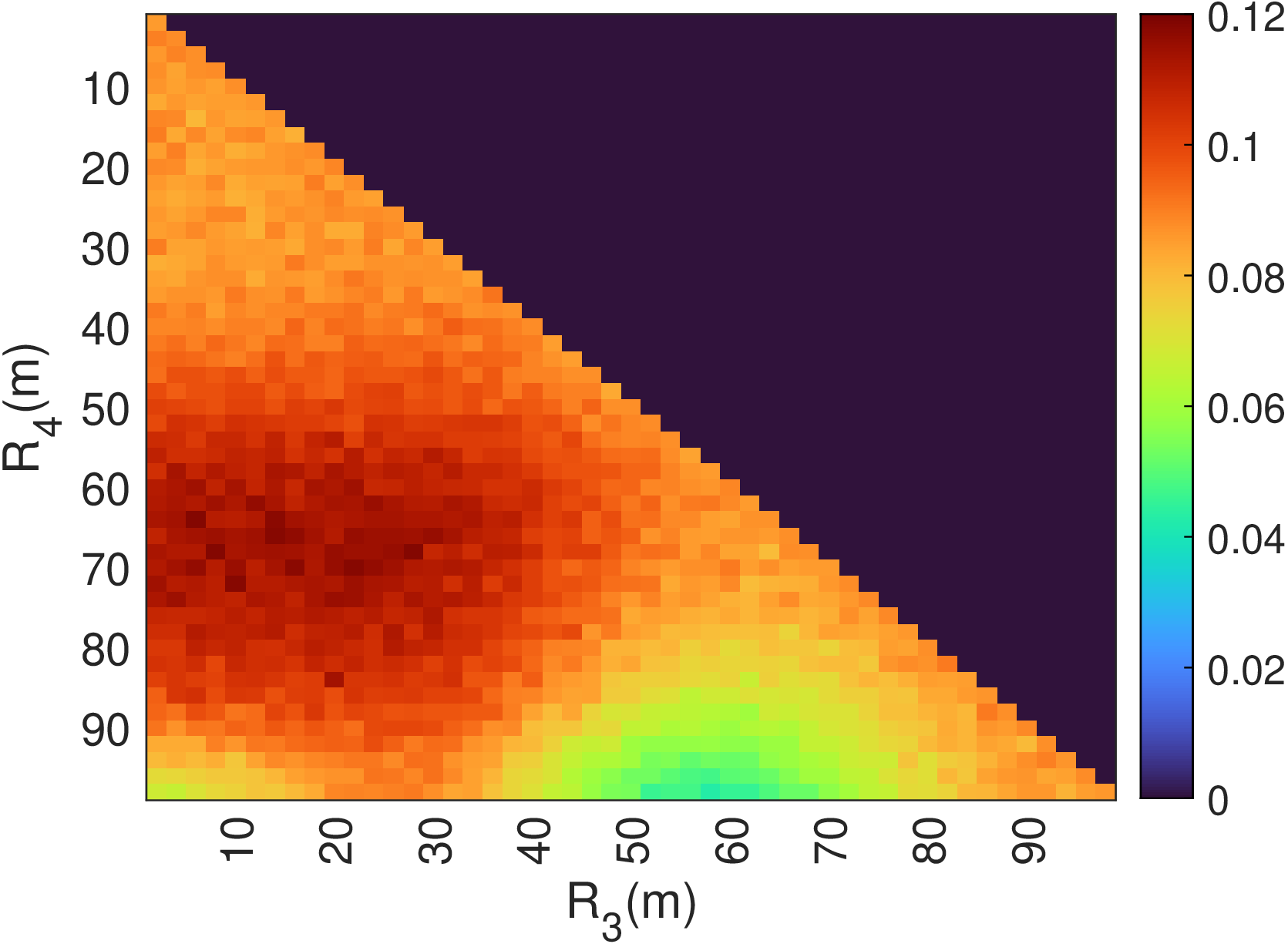}
\caption{The heatmap of overall coverage probability (simulation) when $(R_1,R_2)=(30,60)$ and $N^t=10$. }
\label{fig:OCPsim_3060}
\end{figure}
\begin{figure}[ht]    
        \centering
\includegraphics[width=1\columnwidth]{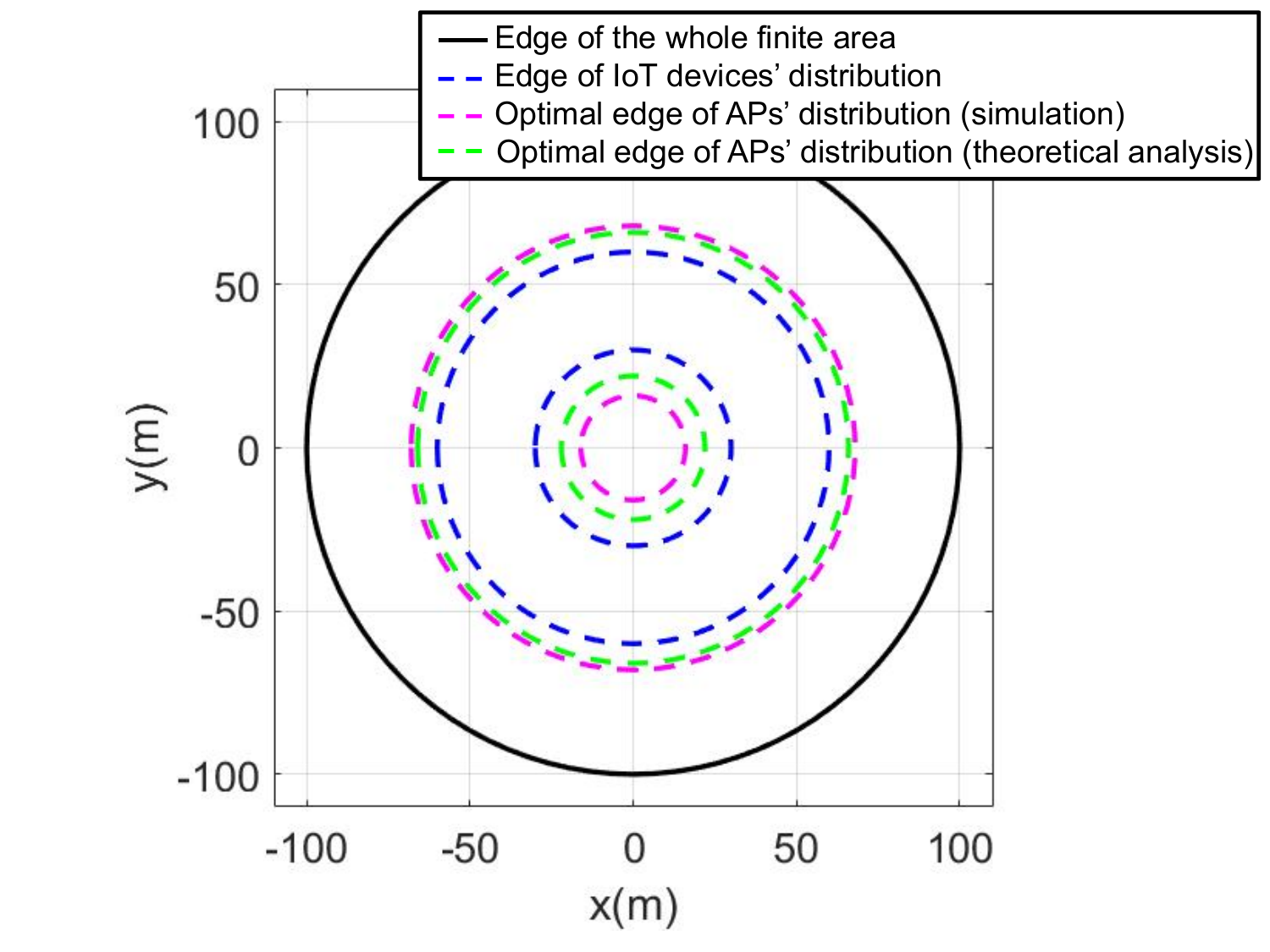}
\caption{The theoretical analysis and simulation result of the optimal edge of APs' distribution when $(R_1,R_2)=(30,60)$ and $N^t=10$.}
\label{fig:Optimal_3060}
\end{figure}

As introduced in Corollary \ref{cor:newfDd}, we adopt the `center-edge distribution' when we limit the distribution area of IoT devices and APs. When we change the distribution range of the IoT devices, the optimal location of $(R_3, R_4)$ is changed at the same time. Fig.\ref{fig:OCPsim_3060} is an example when $(R_1, R_2)=(30,60)$. Fig.\ref{fig:Optimal_3060} shows the theoretical analysis and simulation result of the optimal edge of APs' distribution in this case. The optimal value of $R_3$ is less than $R_1$, and the optimal value of $R_4$ is larger than $R_2$. That means the optimal distribution range of APs satisfies $\ncalA_3\subset \ncalA_1$ and $\ncalA\backslash\ncalA_4\subset \ncalA\backslash\ncalA_2$ in our simulation.\\
% \begin{figure}[htbp]
% \subfigure[The heatmap of overall coverage probability (simulation).]{
% \begin{minipage}[t]{1\linewidth}
% \centering 
% \includegraphics[width=1\textwidth]{figures/OCPsim_30_60.png}
% \label{fig:OCPsim_3060}
% \end{minipage}}
% \subfigure[Optimal edges of APs' distribution.]{
% \begin{minipage}[t]{1\linewidth}
% \centering 
% \includegraphics[width=1\textwidth]{figures/Optimal.pdf}
% \label{fig:Optimal_3060}
% \end{minipage}}
% \caption{Theoretical analysis and simulation result when $(R_1,R_2)=(30,60)$ and $N^t=10$.}
% \end{figure}
\indent In Fig.\ref{fig:R1234}, the grid color corresponding to the better ordered pair $(R_3, R_4)$ is closer to dark red. When we fix the value of $R_1$ and increase the value of $R_2$, the optimal $R_3$ is not greatly affected, but the value of the optimal $R_4$ increases. Similarly, the optimal $R_4$ is larger when $R_2$ increases. 

\begin{figure}[ht]    
        \centering
\includegraphics[width=1\columnwidth]{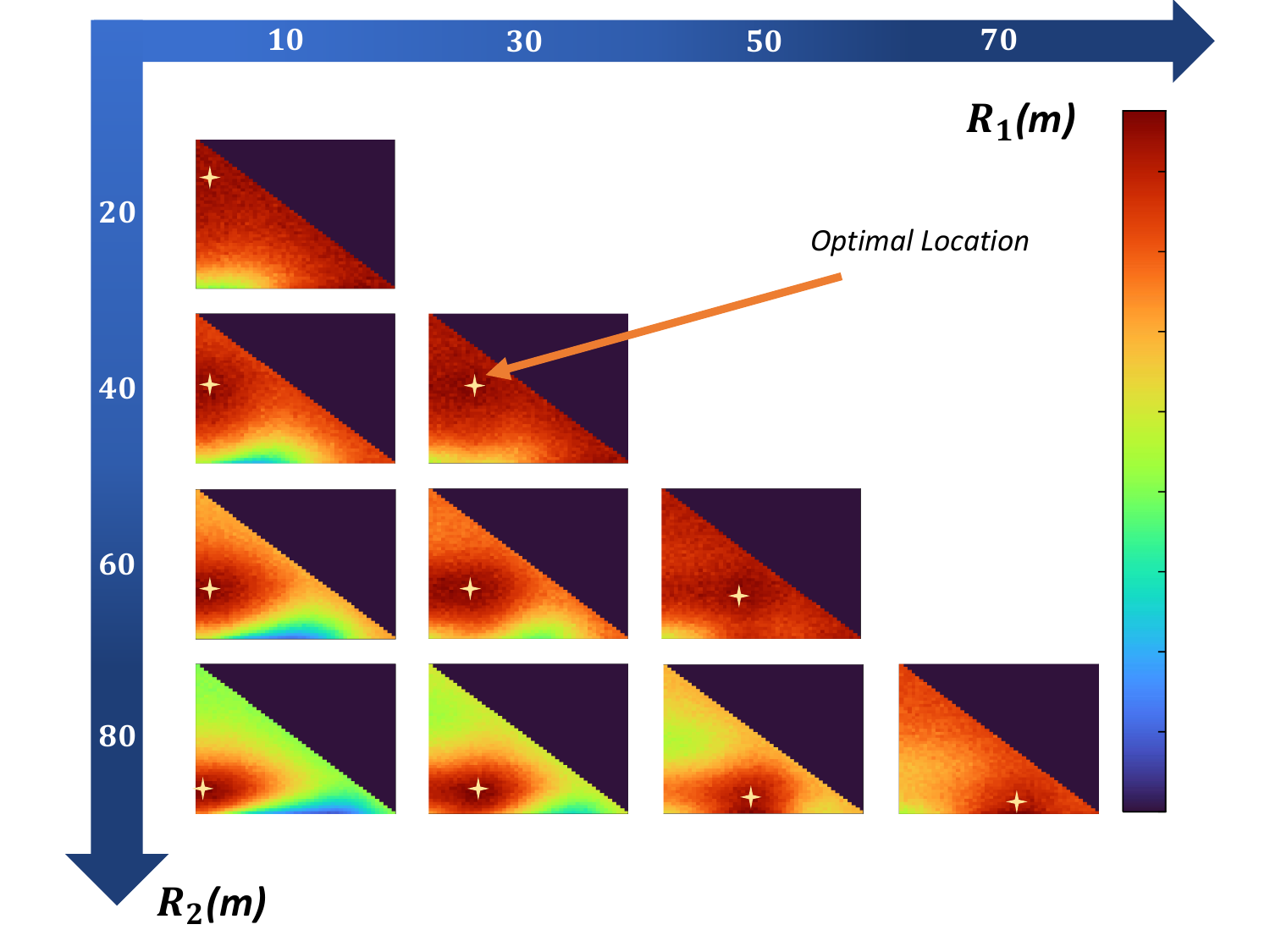}
\caption{The optimal choice for $(R_3, R_4)$ when $(R_1, R_2)$ have different choices and $N^t=10$ (simulation). }
\label{fig:R1234}
\end{figure}

{In the considered system, all APs and IoT devices are synchronized. However, in realistic applications, the synchronization issues may result in loss of wireless energy, signal power, and interference, which complicates the performance analysis. Therefore, synchronization challenges and their impact on rural wireless networks are expected to be further analyzed.} {In addition, the real-world constraints of IoT devices and APs are complex, which makes the performance analysis more difficult. Electromagnetic environment simulation platforms can help analyze the coverage performance. However, the cost of building a virtual environment for each different rural area is high. Therefore, although our proposed system is an idealistic model, it is still a tractable and convenient solution to analyze the coverage analysis of RF-powered IoT networks in rural areas before deployment.}

\section{Conclusion}
{
In this paper, we build a mathematical model to analyze the downlink performance of RF-powered IoT devices in rural areas. We use an inhomogeneous PPP to describe the difference in ICT density between the urban area and rural area in a large-scale city. At the same time, we use a BPP to model the distribution of a fixed number of APs inside a considered finite area. We focus on the performance of battery-less IoT devices inside the finite area, which harvest and process RF signals in different sub-slots. The energy coverage, transmission coverage, and overall coverage probability are defined to analyze the performance of IoT devices. \\
\indent We show that due to the effect of BSs in the whole city, IoT devices closer to the city center have higher energy harvesting performance than those far away from the city center. This means that unbalanced BS density causes the wireless power digital divide. Because rural areas can be modeled as a finite area far from the city center, the effect (including energy supply and interference) from BSs can be ignored. We prove that the IoT devices closer to the center of the finite area can work better than those towards the edge. We propose the `center-edge distribution' to model the deployment constraints of IoT devices and APs in rural areas. Based on this, we prove that an optimal APs' distribution exists when the IoT devices' distribution is constrained.} \\
\appendices
\section{Proof of Theorem~\ref{theo:ECP}}\label{app:ECP}
\indent In the considered system, we assume that $r=\|\xx-\yy\|\in(0,r_d+v_0)$, which is the distance between the closest AP and the typical IoT device. Fig.\ref{fig:rural} shows the illustration of BSs' distribution. Using the expression for harvested energy $E_H$ shown in (\ref{EH2}), we can derive the conditional energy coverage probability:
\begin{equation}
    \begin{array}{@{}r@{}l}
    &\P(E_H\geq E_{\min}|r,v_0,\psi)\\
    &\ =\P\bigg(\tau T \eta (\displaystyle\sum\limits_{\yi\in \Phi_{AP}}p_{AP} \gi D_i^{-\alpha}\\&\ \;\;\;\;\;\;\;\;\;\;\;\;\;\;\;\;\;\;\;\;+\displaystyle\sum\limits_{\wj\in \Phi_{BS}}p_{BS} \gj \xi_j^{-\alpha})\geq E_{\min}\bigg|r,v_0,\psi\bigg) \\
    &\ = \P\bigg(\g1 D_1^{-\alpha}+ \displaystyle\sum\limits_{\yi\in \Phi_{AP} \backslash \yy }\gi D_i^{-\alpha}\\&\ \;\;\;\;\;\;\;\;\;\;\;\;\;\;\;\;\;\;\;\;\;\;+\displaystyle\frac{p_{BS}}{p_{AP}}\displaystyle\sum_{\wj\in \Phi_{BS}}\gj \xi_j^{-\alpha} \geq \ncalC(\tau)\bigg|r,v_0,\psi\bigg)\\
    &\ \overset{(a)}{\approx} \P\bigg(\g1 r^{-\alpha}+\Omega_{AP}(r)+\displaystyle\frac{p_{BS}}{p_{AP}}\Omega_{BS}\geq \ncalC(\tau)\bigg|r,v_0,\psi\bigg),
    \end{array}
\end{equation}
where $\ncalC(\tau)=\frac{E_{\min}}{\tau T \eta p_{AP}}$. In step (a), we define $\Omega_{AP}(r)$ as a conditional expectation of fading gains of APs except the closest one, and it can be derived as:
\begin{equation}
    \begin{aligned}
    \Omega_{AP}(r)&=\E\left[\sum\limits_{\yi\in\Phi_{AP} \backslash \yy }\gi D_i^{-\alpha}\Big|r,v_0,\psi\right]\\
    &\overset{(b)}{=}\E\left[\sum\limits_{\yi\in\Phi_{AP}\backslash \yy }D_i^{-\alpha}|r,v_0 \right]\\
    &\overset{(c)}{=}(N^t-1)\int_{r}^{r_d+v_0}\frac{1}{u^{\alpha}}f_U(u)\dd u,
    \end{aligned}
\end{equation}
where step (b) comes from $\gi\sim\exp(1)$, and step (c) follows the assumption that the APs are \iid in our system. Because BSs follow an inhomogeneous PPP with a 2D-Gaussian density $\widetilde{\lambda}_p(\zeta)$ and $\gj\sim\exp(1)$ models the Rayleigh fading, $\Omega_{BS}$ can be calculated as:
\begin{equation}
\begin{array}{@{}r@{}l}
    \Omega_{BS}&=\displaystyle\E\Big(\sum_{\wj\in \Phi_{BS}}\gj \xi_j^{-\alpha}|r,v_0,\psi\Big)\\
    &=\displaystyle\E\Big(\sum_{\wj\in \Phi_{BS}}\xi_j^{-\alpha}|v_0,\psi\Big)\\
    &=\displaystyle\int_0^{2\pi}\int_0^{\infty}\widetilde{\lambda}_p(\zeta)\xi(\zeta,\gamma)^{-\alpha}\zeta \dd \zeta \dd\gamma.
    \end{array}
\end{equation}
\indent Because $\g1 \sim \exp(1)$, the expression of the energy coverage probability can be derived as follows:
\begin{equation}
    \begin{array}{@{}r@{}l}
    &\P(E_H\geq E_{\min}|r,v_0)\\
    &\ =\P\bigg(\g1 \geq r^{\alpha}\Big[\ncalC(\tau)-\Omega_{AP}(r)-\displaystyle\frac{p_{BS}}{p_{AP}}\Omega_{BS}\Big]\bigg|r,v_0\bigg)\\
    &\ \overset{(d)}{=}\exp\bigg(-r^{\alpha}\Big[\ncalC(\tau)-\Omega_{AP}(r)-\displaystyle\frac{p_{BS}}{p_{AP}}\Omega_{BS}\Big]^+\bigg),
    \end{array}
    \label{ECP_AP}
\end{equation}
where step (d) comes from the CCDF of exponential distribution of $\g1$, and $[x]^+=\max\{0,x\}$ in this step. Since $\Omega_{AP}(r)$ is a monotonically decreasing and continuous function of $r$ where $\Omega_{AP}(r) \in (0,+\infty)$, there is only one solution $r=\ncalS$ for the equation $\ncalC(\tau)-\Omega_{AP}(r)-\Omega_{BS}=0$. Thus, $[\ncalC(\tau)-\Omega_{AP}(r)-\frac{p_{BS}}{p_{AP}}\Omega_{BS}]^+$ can be written as:
\begin{equation}
\begin{array}{@{}l}
    \displaystyle\Big[\ncalC(\tau)-\Omega_{AP}(r)-\frac{p_{BS}}{p_{AP}}\Omega_{BS}\Big]^+\\
    \ =\left\{
    \begin{array}{cl}
        \ncalC(\tau)-\displaystyle\Omega_{AP}(r)-\frac{p_{BS}}{p_{AP}}\Omega_{BS}, &\ncalS<r<r_d+v_0\\
        0, &0<r\leq \ncalS
    \end{array}
    \right..
\end{array}
\end{equation}%\\    \ 
\indent We can calculate the expectation of energy coverage probability using $f_{R}(r)$: 
\begin{equation}\
    \begin{array}{@{}r@{}l}
    &\P(E_H\geq  E_{\min}|v_0,\psi) =\E_{R}\bigg[\P(E_H\geq E_{\min}|R,v_0,\psi)\bigg]\\
    &\ =\displaystyle\E_{R}\bigg[\exp\bigg(-R^{\alpha}\Big[\ncalC(\tau)-\Omega_{AP}(R)-\frac{p_{BS}}{p_{AP}}\Omega_{BS}\Big]^+\bigg)\bigg]\\
    &\ =\displaystyle\int_0^{r_d+v_0} \exp(-r^{\alpha}\Big[\ncalC(\tau)-\Omega_{AP}(r)-\frac{p_{BS}}{p_{AP}}\Omega_{BS}\Big]^+)f_{R}(r)\dd r\\
    &\ =\displaystyle\int_0^{\ncalS} f_{R}(r)\dd r \\
    &\ +\displaystyle\int_{\ncalS}^{r_d+v_0} \exp\big(-r^{\alpha}\Big[\ncalC(\tau)-\Omega_{AP}(r)-\displaystyle\frac{p_{BS}}{p_{AP}}\Omega_{BS}\Big]\big)f_{R}(r)\dd r.
    \end{array}
\end{equation}
\indent  If the rural area is far from the city center, the effect of BSs can be ignored, which means the summation of BSs' fading gains $\Omega_{BS}$ approach 0. In this case, the conditional ECP can be simplified as $\P(E_H\geq E_{\min}|r,v_0)$ using the harvested energy in (\ref{EH1}):
\begin{equation}
    \begin{array}{@{}r@{}l}
    &\P(E_H\geq E_{\min}|r,v_0)\\
    &\ =\P\bigg(\tau T \eta \sum\limits_{\yi\in\Phi_{AP}}p_{AP} \gi D_i^{-\alpha}\geq E_{\min}\bigg) \\
    &\ = \displaystyle\P\bigg(\g1 D_1^{-\alpha}+ \sum\limits_{\yi\in \Phi_{AP} \backslash \yy }\gi D_i^{-\alpha}\geq \ncalC(\tau)\bigg)\\
    &\ \overset{(e)}{\approx} \P\bigg(\g1 r^{-\alpha}+\Omega_{AP}(r)\geq \ncalC(\tau)\bigg),
    \end{array}
\end{equation}
where step (e) is similar to step (a). In this case, the conditional ECP is:

% in which
% \begin{equation}
% \begin{array}{@{}r@{}l}
%     \displaystyle\iint_P\cdot\zeta \dd\zeta \dd\gamma=\big(\displaystyle\int_0^{y^*}+\displaystyle\int_{2\pi-y^*}^{2\pi}\big)\big(\displaystyle\int_0^{\zeta_M(\gamma)}
%     +\displaystyle\int_{\zeta_X(\gamma)}^{\infty}\big)\cdot\zeta \dd\zeta\dd\gamma+\displaystyle\int_{y^*}^{2\pi-y^*}\displaystyle\int_0^{\infty}\cdot\zeta \dd\zeta \dd\gamma.
% \end{array}
% \end{equation}
%, where $\gamma^*=\arcsin (\frac{r_d}{r_c})$, $\zeta_M(\gamma)=r_c\cos\gamma-\sqrt{r_d^2-r_c^2\sin^2\gamma}$ and $\zeta_X(\gamma)=r_c\cos\gamma+\sqrt{r_d^2-r_c^2\sin^2\gamma}$.

\begin{equation}
    \P(E_H\geq E_{\min}|r,v_0)=\exp\Big(-r^{\alpha}\Big[\ncalC(\tau)-\Omega_{AP}(r)\Big]^+\Big).
    \label{ECP_AP_BS}
\end{equation}

\section{Proof of Lemma~\ref{lem:TCP}}\label{app:TCP}
The overall coverage probability, based on the distribution of the closest distance to APs, is conditioned on the harvested energy and {\rm SINR}. To analyze it, we need to derive the expression of the transmission coverage probability conditioned on $R=r$, which is written as $\P({\rm SINR}\geq \beta|r,v_0)$. With this condition, the power of the signal from the closest AP received by the typical IoT device is $p_{AP} \h1 r^{-\alpha}$, and the summation of those from other APs and BSs can be written as (\ref{interference2}). The conditional transmission coverage probability can be derived as follows:
\begin{equation}
    \begin{array}{@{}l@{}l}
    \P\left({\rm SINR}\geq \beta|r,v_0,\psi\right)
=\P\left(\displaystyle\frac{p_{AP} \h1 r^{-\alpha}}{I_1+\sigma^2}\geq \beta|r,v_0,\psi\right)\\
    \ =\P\bigg(\h1\geq \displaystyle\frac{\beta r^{\alpha}\sigma^2}{p_{AP}}+\beta r^{\alpha}\displaystyle\sum \limits_{\yi \in \Phi_{AP} \backslash \yy}\hi D_i^{-\alpha}\\
    \ \;\;\;\;\;\;\;\;\;\;\;\;\;\;\;\;\;\;\;\;\;\;\;\;\;\;\;+\displaystyle\frac{p_{BS}}{p_{AP}}\beta r^{\alpha}\displaystyle\sum \limits_{\wj \in \Phi_{BS}}\hj \xi_j^{-\alpha}\bigg|r,v_0,\psi\bigg)\\
    \ =\exp(-\displaystyle\frac{\beta r^{\alpha} \sigma^2}{p_{AP}})\\ \ \times\E\bigg[\exp\Big(-\beta r^\alpha \displaystyle\sum \limits_{\yi \in \Phi_{AP} \backslash \yy}\hi D_i^{-\alpha}\Big)\bigg|r,v_0,\psi\bigg]\\
    \ \times\E \bigg[\exp\Big(-\displaystyle\frac{p_{BS}}{p_{AP}}\beta r^\alpha \displaystyle\sum \limits_{\wj \in \Phi_{BS}}\hj \xi_i^{-\alpha}\Big)\bigg|r,v_0,\psi\bigg].
    % \\
    % =\exp(-\frac{\beta r^{\alpha} \sigma^2}{p_{AP}})\ncalL_{AP}(\beta r^{\alpha})\ncalL_{BS}\Big(\displaystyle\frac{p_{BS}}{p_{AP}}\beta r^{\alpha}\Big).
    \end{array}
\end{equation}

\indent We define the Laplace transform of fading gains from other APs conditioned on $R=r$ as:

\begin{equation}
    \begin{array}{@{}r@{}l}
    &\ncalL_{AP}(s)=\E\bigg[\exp\bigg(-s \displaystyle\sum \limits_{\yi \in \Phi_{AP} \backslash \yy}\hi D_i^{-\alpha}\bigg)\bigg|r,v_0,\psi\bigg]\\
    &\ =\E_H\bigg[ \displaystyle\prod \limits_{\yi \in \Phi_{AP} \backslash \yy} \exp\bigg(-s \hi D_i^{-\alpha} \bigg)\bigg|r,v_0\bigg]\\
    &\ \overset{(f)}{=}\E\bigg[ \displaystyle\prod \limits_{\yi \in \Phi_{AP} \backslash \yy} \displaystyle\frac{1}{1+s D_i^{-\alpha}}\bigg|r,v_0\bigg]\\
    &\ \overset{(g)}{=}\bigg(\displaystyle\int_0^{r_d+v_0}\displaystyle\frac{1}{1+su^{-\alpha}}f_U(u)\dd u\bigg)^{N^t-1},
    \end{array}
\end{equation}
where step (f) comes from $\hi\sim \exp(1)$, and step (g) follows from the conditional PDF of $D_i$. Differently, the Laplace transform of relative fading gains from BSs is:
\begin{equation}
\begin{array}{@{}r@{}l}
    &\ncalL_{BS}(s)=\E\bigg[\exp\bigg(-s \displaystyle\sum \limits_{\wj \in \Phi_{BS}}\hj \xi_i^{-\alpha}\bigg)\bigg|r,v_0,\psi\bigg]\\
    &\ =\E_H\bigg[\displaystyle\prod\limits_{\wj \in \Phi_{BS}}\exp\bigg(-s \hj \xi_i^{-\alpha}\bigg)\bigg|v_0,\psi\bigg]\\
    &\ =\E\bigg[\displaystyle\prod\limits_{\wj \in \Phi_{BS}}\exp\bigg(\frac{1}{1+s \xi_i^{-\alpha}}\bigg)\bigg|v_0,\psi\bigg]\\
    &\ \overset{(h)}{=}\exp(-\displaystyle\int_0^{2\pi}\int_0^{\infty} \widetilde{\lambda}(\zeta)\bigg(1-\displaystyle\frac{1}{1+s\xi(\zeta,\gamma)^{-\alpha}}\bigg)\zeta d\zeta d\gamma),
\end{array}
\end{equation}
where step (h) is based on the PGFL of the inhomogeneous PPP.
Thus, the conditional TCP is
\begin{equation}
\begin{array}{r@{}l}
    \P&({\rm SINR}\geq \beta|r,v_0,\psi)\\ \ &=\displaystyle\exp(-\frac{\beta r^{\alpha} \sigma^2}{p_{AP}})\ncalL_{AP}(\beta r^{\alpha})\ncalL_{BS}\Big(\displaystyle\frac{p_{BS}}{p_{AP}}\beta r^{\alpha}\Big).
\end{array}
\end{equation}
% \indent We can use the conditional TCP and the closest distance distribution to derive the expectation of TCP:
% \begin{equation}
%     \begin{array}{@{}r@{}l}
%     \P({\rm SINR}\geq \beta|v_0)&=\E_{R}\left[\P({\rm SINR}\geq \beta|R,v_0)\right]\\
%     &=\displaystyle\int_0^{r_d+v_0}\P({\rm SINR}\geq \beta|r,v_0)f_{R}(r)\dd r\\
%    &=\displaystyle\int_0^{r_d+v_0}\exp(-\displaystyle\frac{\beta r^{\alpha} \sigma^2}{p_{AP}})\ncalL_{AP}(\beta r^{\alpha})\ncalL_{BS}\Big(\displaystyle\frac{p_{BS}}{p_{AP}}\beta r^{\alpha}\Big)f_{R}(r)\dd r,
%     \end{array}
% \end{equation}
% where the $f_{R}(r)$ is the PDF of the closest distance, which is shown in (\ref{fRr}).\\
\indent When the BSs' interference can be ignored, the interference can be simplified as (\ref{interference1}), and the conditional TCP is simplified as follows:

\begin{equation}
    \begin{array}{@{}r@{}l}
    &\P\left({\rm SINR}\geq \beta|r,v_0\right) =\P\left(\displaystyle\frac{p_{AP} \h1 r^{-\alpha}}{I_1+\sigma^2}\geq \beta\Big|r,v_0,\psi\right)\\
    &\ =\P\left(\h1\geq \displaystyle\frac{\beta r^{\alpha}\sigma^2}{p_{AP}}+\beta r^{\alpha}\sum \limits_{\yi \in \Phi_{AP} \backslash \yy} \hi D_i^{-\alpha}\Big|r,v_0\right)\\
    &\ =\E\left[\exp(-\displaystyle\frac{\beta r^{\alpha}\sigma^2}{p_{AP}}-\beta r^{\alpha}\sum \limits_{\yi \in \Phi_{AP} \backslash \yy} \hi D_i^{-\alpha})\bigg|r,v_0\right]\\
    &\ =\exp(-\displaystyle\frac{\beta r^{\alpha} \sigma^2}{p_{AP}})\E \bigg[\exp\bigg(-\beta r^\alpha \displaystyle\sum \limits_{\yi \in \Phi_{AP} \backslash \yy}\hi D_i^{-\alpha}\bigg)\bigg|r,v_0\bigg]\\
    &\ =\exp(-\displaystyle\frac{\beta r^{\alpha} \sigma^2}{p_{AP}})\ncalL_{AP}(\beta r^{\alpha})
    .
    \end{array}
\end{equation}

\section{Proof of Theorem~\ref{theo:OCP}}\label{app:OCP}
Considering the two main system requirements $E_H$ and {\rm SINR}, the overall coverage probability can be derived as
\begin{equation}
    \begin{array}{@{}r@{}l}
    &P_{\rm cov}(\beta|v_0,\psi)=\E[\nbb1 ({\rm SINR}\geq \beta)\nbb1 (E_H\geq E_{\min})]\\
    &\ \overset{(i)}{=}\E_{R}\bigg[\E_H[\nbb1 ({\rm SINR}\geq \beta)|R,v_0,\psi]\\
    &\ \;\;\;\;\;\;\;\;\;\;\;\;\;\;\;\;\;\;\;\;\;\;\;\;\;\;\;\;\;\;\;\;\;\;\;\;\;\;\;\times\E_G[\nbb1 (E_H\geq E_{\min})|R,v_0,\psi]\bigg]\\
    &\ =\E_{R}\bigg[\P({\rm SINR}\geq \beta|R,v_0,\psi)\times \P(E_H\geq E_{\min}|R,v_0,\psi)\bigg],
    \label{Pcovb}
    \end{array}
\end{equation}
where step (i) is the result of the assumption that fading gains during two sub-slots are independent. The two conditional probability has been proved in Theorem \ref{theo:ECP} and Lemma \ref{lem:TCP}. Substituting these two conditional probabilities in (\ref{Pcovb}), the overall coverage probability can be written as:
\begin{equation}
    \begin{array}{@{}r@{}l}
    &P_{\rm cov}(\beta|v_0,\psi)\\
    &\ =\displaystyle\int_0^{r_d+v_0}\P({\rm SINR}\geq \beta|r,v_0,\psi) \\
    &\ \;\;\;\;\;\;\;\;\;\;\;\;\;\;\;\;\;\;\;\;\;\;\;\;\;\;\;\;\;\;\;\;\;\;\;\;\times\displaystyle\P(E_H\geq E_{\min}|r,v_0,\psi)f_{R}(r)\dd r\\
    &\ =\displaystyle\int_0^{r_d+v_0}\exp(-r^{\alpha}[\ncalC(\tau)-\Omega_{AP}(r)-\frac{p_{BS}}{p_{AP}}\Omega_{BS}]^+)\\
    &\  \;\;\;\;\;\;\;\times\exp(-\displaystyle\frac{\beta r^{\alpha} \sigma^2}{p_{AP}})\ncalL_{AP}(\beta r^{\alpha})\ncalL_{BS}(\displaystyle\frac{p_{BS}}{p_{AP}}\beta r^{\alpha})f_{R}(r)\dd r.
    \end{array}
\end{equation}
\indent Similarly, when the effect of BSs can be ignored, the OCP is simplified as:
\begin{equation}
    \begin{array}{@{}r@{}l}
    &P_{\rm cov}(\beta|v_0)\\
    &\ =\displaystyle\int_0^{r_d+v_0}\P({\rm SINR}\geq \beta|r,v_0)\P(E_H\geq E_{\min}|r,v_0)f_{R}(r)\dd r\\
    &\ =\displaystyle\int_0^{r_d+v_0}\exp(-r^{\alpha}[\ncalC(\tau)-\Omega_{AP}(r)]^+)\\
    &\ \displaystyle \;\;\;\;\;\;\;\;\;\;\;\;\;\;\;\;\;\;\;\;\;\;\;\;\;\;\;\;\;\;\;\times \exp(-\frac{\beta r^{\alpha} \sigma^2}{p_{AP}})\ncalL_{AP}(\beta r^{\alpha})f_{R}(r)\dd r.
    \end{array}
\end{equation}

\ifCLASSOPTIONcaptionsoff
  \newpage
\fi

\bibliographystyle{IEEEtran}
\bibliography{ref}

\end{document}